\newcommand{\imp}{\Rightarrow}
\newcommand{\iosub}[3]{\mbox{$#1[#2:=#3]_{IO}$}}
\newcommand{\vect}[1]{\overrightarrow{#1}}
\begin{document}

\title{On the effect of the IO-substitution on the Parikh image of semilinear AFLs\protect
}

\author{Pierre Bourreau
}
\institute{Institut f\"ur Sprache und Information\\
Heine-Heinrich Universit\"at D\"usseldorf\\
Universit\"atstr. 1\\
40225 D\"usseldorf, Germany\\
\email{pierre.bourreau@gmail.com
}}

\maketitle

\begin{abstract}
  Back in the 80's, the class of mildly context-sensitive formalisms was introduced so as to capture the syntax of natural languages. While the languages generated by such formalisms are constrained by the constant-growth property, the most well-known and used mildly context-sensitive formalisms like tree-adjoining grammars or multiple context-free grammars generate languages which verify the stronger property of being semilinear. In \cite{Bourreau-IOsubstitution}, the operation of IO-substitution was created so as to exhibit mildly-context sensitive classes of languages which are not semilinear although they verify the constant-growth property. In this article, we extend the notion of semilinearity, and characterise the Parikh image of the IO-MCFLs (\textit{i.e.}~languages which belong to the closure of MCFLs under IO-substitution) as universally-linear. Based on this result and on the work of Fischer on macro-grammars, we then show that IO-MCFLs are not closed under inverse homomorphism, which proves that the family of IO-MCFLs is not an abstract family of languages.
\end{abstract}

\keywords{mildly context-sensitive formalisms \and semilinearity \and constant-growth \and IO macro-grammars \and multiple context-free grammars \and abstract family of languages}

\section{Introduction}

The mathematical description of natural languages syntax is a problem which has captured the attention of scientists for years. Since the initial work of Chomsky and Schutzenberger~\cite{chomsky-hierarchy} on formal languages, it is now commonly accepted that the class of context-free languages is too weak to entirely capture the structure of syntax. This was first proved in \cite{shieber85} and \cite{huybregts84}, through examples in Swiss-German, and later on confirmed in \cite{michaelis97} and discussed in~\cite{Kobele06Diss}. At the same time, \cite{Joshi85} defined a new class of formalisms which he called \emph{mildly context-sensitive}, in an attempt to answer the question: ``How much context-sensitivity is needed to provide reasonable structural descriptions?''; such formalisms are defined through the following conditions:
\begin{enumerate}
  \item the class of generated languages must encompass the class of context-free languages;
  \item they must take into account some limited cross-serial dependencies;
  \item they must be recognisable in polynomial-time;
  \item and, the generated languages must verify the constant-growth property;
\end{enumerate}

While this definition and the answer it gives to the initial question is under debate, we focus on the fourth point of the definition and the notion of constant-growth property. Indeed, many mildly context-sensitive formalisms are known to verify the stronger property of generating semilinear languages. It is for instance the case of tree-adjoining grammars, multiple context-free grammars~\cite{seki91} (or alternatively linear context-free rewriting systems~\cite{Vijay-ShankerWeirJoshi:87}), or derivational minimalist grammars\cite{stabler96,Michaelis98}. In the following work, we investigate the gap between semilinear languages and the ones which verify the constant-growth property.

In \cite{Bourreau-IOsubstitution}, an operation on languages called IO-substitution was defined. This operation allows one to enrich a class of languages with a limited copying mechanism. IO-substitution can indeed be seen as a bounded copying operation on strings. In this preliminary work, \cite{Bourreau-IOsubstitution} proved three main properties. First, given an abstract family of semilinear languages $\mathbf{L}$, its closure under IO-substitution $\mathbf{IO(L)}$ forms a family of languages which is closed under union, concatenation, homomorphism and intersection with regular sets; an open question is therefore to prove whether $\mathbf{IO(L)}$ is an abstract family of languages. Moreover, it was proved that if the languages in $\mathbf{L}$ verify the constant-growth property, so do the languages in $\mathbf{IO(L)}$. Finally, in the special case where $\mathbf{L}=\mathbf{MCFL}$, the class of multiple context-free languages, the authors showed that any language in $\mathbf{IO(MCFL)}$ can be recognised in polynomial-time; these first results lead to considering the formalisms which generated languages falls within $\mathbf{IO(MCFL)}$ as candidates for being mildly context-sensitive.

In the present article, we investigate a precise characterisation of the Parikh image of languages in $\mathbf{IO(L)}$, where $\mathbf{L}$ is a family of semilinear languages (\textit{i.e.}~as a particular case, the results we obtain apply when $\mathbf{L}$ is the family of regular, context-free, or multiple context-free languages of strings). In order to do so, we extend the notion of semilinearity in a natural way, by defining \emph{functional vector-sets}; from this definition, we consider two new characterisations for sets of vectors: \emph{existentially-semilinear sets} and \emph{universally-semilinear sets}, and show that the Parikh image of $\mathbf{IO(L)}$ falls within the second one, leading, as a corollary, to an alternative proof that such languages verify the constant-growth property. In the second part of the article, we give a proof of the non-closure of $\mathbf{IO(L)}$ under inverse homomorphism, where $\mathbf{L}$ is an abstract family of semilinear languages. This result, which is obtained thanks to the previous characterisation of the Parikh image for the considered languages and by reusing the main ideas of Fischer's proof of the non-closure of IO-macro grammars under inverse homomorphism, shows that $\mathbf{IO(L)}$ is not an abstract family of languages.

For simplicity and if not specified otherwise, any family of languages $\mathbf{L}$ will be considered as an abstract family of semilinear languages in the rest of the article.

The outline of this document is the following: section~\ref{section:basics} defines the fundamental notions needed from formal language theory: the Parikh image, semilinearity, the constant-growth property, and the IO-substitution. In section~\ref{section:parikh}, we introduce universal-semilinearity and existential-semilinearity as extended notions of semilinearity, and show that the Parikh image of languages in $\mathbf{IO(L)}$ falls into a class of sets for which the constant-growth property is verified. Finally, section~\ref{section:inverse-homomorphism} is dedicated to prove the non-closure of $\mathbf{IO(L)}$ under inverse homomorphism; this proof will also bring the opportunity to study new structural properties of $\mathbf{IO(L)}$.


\section{Semilinearity, constant-growth and IO-substitution}\label{section:basics}
\subsection{Formal languages, constant-growth and semilinearity}

We first introduce the notations for various usual notions related to formal languages. Given a set $\Sigma$ (called an alphabet), we write $\Sigma^\ast$ for the set of
words built on $\Sigma$, and $\epsilon$ for the empty word.  Given $w$
in $\Sigma^\ast$, we write $|w|$ for its length, and $|w|_a$ for the
number of occurrences of a letter $a$ of $\Sigma$ in $w$.  A language
on $\Sigma$ is a subset of $\Sigma^\ast$. Given a language $L$, we will speak of the alphabet $\Sigma$ of $L$ to designate any set such that $L \subseteq \Sigma^\ast$. Given two languages
$L_1,L_2\subseteq\Sigma^\ast$, $L_1\cdot L_2$, the concatenation of
$L_1$ and $L_2$, is the language $\{w_1w_2\mid w_1\in L_1\land w_2\in
L_2\}$; the union of $L_1$ and $L_2$ is written $L_1+L_2$.

We write $\mathbb{N}$ for the set of natural numbers. For a finite
alphabet $\Sigma$, $\mathbb{N}^\Sigma$ is the set of vectors whose
coordinates are indexed by the letters of $\Sigma$. Vectors will be noted $\vect{v}$, and a $n$-dimensional vector ($n \in \mathbb{N}$) will be written $\langle c_1, \dots, c_n \rangle$ (where $c_1, \dots, c_n \in \mathbb{N}$) when we wish to exhibit the values of the vector on each of its dimension. Given $a \in
\Sigma$ and $\vect{v}\in \mathbb{N}^{\Sigma}$, $\vect{v}[a]$ will denote
the value of $\vect{v}$ on the dimension $a$.

In \cite{Joshi85}, the constant-growth property was introduced as a condition languages generated by mildly context-sensitive formalisms must verify. This condition expresses some constraints on the distribution of the length of the words in a language:

\begin{definition}[Constant-growth]\label{def:constant_growth}
  A language $L \subseteq \Sigma^*$ is said to be \emph{constant-growth} if there exist $k,c\in\mathbb{N}$ such that, for every $w \in L$, if $|w| > k$, then there is $w' \in L$ for which $|w|<|w'|\leq |w|+c$.

\end{definition}

As mentioned in the introduction, most of the mildly context-sensitive generative formalisms commonly
used in modeling natural language syntax generate languages which verify the stronger property of semilinearity, which is based on the following notion of the Parikh image.

\begin{definition}[Parikh image]
  Let us consider a word $w$ in a language $L \subseteq \Sigma^*$. The
  \emph{Parikh image of $w$}, written $\vect{p}(w)$ is the vector of $\mathbb{N}^\Sigma$ such that, for every $a \in \Sigma$,
  $\vect{p}(w)[a] = |w|_a$. The Parikh image of $L$ is defined as
  $\vect{p}(L) = \{\vect{p}(w)\ |\ w \in L\}$.
\end{definition}

\begin{definition}[Semilinearity]
  A set $V$ of vectors of $\mathbb{N}^\Sigma$ is said \emph{linear}
  when there are vectors $\vect{v_0}$, \ldots, $\vect{v_n}$ in
  $\mathbb{N}^\Sigma$ such that $V=\{\vect{v_0}+k_1\vect{v_1}+ \ldots
  + k_n\vect{v_n} \mid k_1, \ldots, k_n \in \mathbb{N}\}$.

  A set of vectors is said \emph{semilinear} when it is a finite union of linear sets. 
\end{definition}

Given two sets of vectors $V_1$ and $V_2$ of $\mathbb{N}^k$, for $k \in \mathbb{N}$, we will note $V_1+V_2$
the set $\{\vect{v_1}+\vect{v_2} \mid \vect{v_1} \in V_1, \vect{v_2}
\in V_2\}$. Similarly, given $c \in \mathbb{N}$ and a set of vectors
$V$ of $\mathbb{N}^k$, we will write $cV=\{c \vect{v} \mid \vect{v} \in V\}$.

\begin{definition}
  A language $L$ is said \emph{semilinear} when $\vect{p}(L)$ is a
  semilinear set.
\end{definition}

Well-known  classes of semilinear languages are the class \textbf{RL} of regular languages, the class \textbf{CFL} of context-free languages, the class \textbf{yTAL} of yields of tree-adjoining languages or the class \textbf{MCFL} of multiple context-free languages.

\begin{definition}
Given a class of languages $\mathbf{L}$ and a class of sets of vectors $\mathcal{V}$, we say that \textbf{L} is full for $\mathcal{V}$ if for every language $L \in \mathbf{L}$, $\vect{p}(L) \in \mathcal{V}$; $\mathbf{L}$ is said complete for $\mathcal{V}$ if for every $V \in \mathcal{V}$, there exists $L \in \mathbf{L}$ such that $\vect{p}(L)=V$.
\end{definition}

It is known that $\mathbf{RL}$ is full and complete for the class of semilinear sets. Consequently, \textbf{CFL}, \textbf{yTAL} and \textbf{MCFL} also verify this property as they are full for the class of semilinear sets and include all languages in \textbf{RL}.

Given two alphabets $\Sigma_1$ and $\Sigma_2$, a string homomorphism
$h$ from $\Sigma_1^\ast$ to $\Sigma_2^\ast$ is a function such that
$h(\epsilon) = \epsilon$ and $h(w_1w_2) = h(w_1)h(w_2)$, where $w_1, w_2 \in \Sigma_1^\ast$.  Given
$L\subseteq \Sigma_1^*$, we write $h(L)$ for the language $\{h(w) \in \Sigma_2^*\mid
w\in L\}$.

  Let us consider a class $\mathbf{L}$ of languages. Given an $n$-ary operation
  $\mathtt{op}:(\Sigma^*)^n \to \Sigma^*$ on strings (where $n \in
  \mathbb{N}$), we say that $\mathbf{L}$ is closed under $\texttt{op}$ if
  for every $L_1, \ldots, L_n \in \mathbf{L}$, $\texttt{op}(L_1,
  \ldots, L_n) \in \mathbf{L}$.

\begin{definition}[AFLs]
  A class of languages \textbf{L} is called an \emph{abstract family of languages} (written AFL for concision) if it is closed under union, concatenation, Kleene star, (alphabetic) homomorphism, inverse (alphabetic) homomorphism and intersection with regular sets.
\end{definition}

The previously defined classes \textbf{RL}, \textbf{CFL}, \textbf{yTAL}, and \textbf{MCFL} are known to be AFLs.


\subsection{IO-substitution: going beyond semilinearity}

In \cite{Bourreau-IOsubstitution}, the operation of IO-substitution was defined so as to enrich languages with a limited copying operation.

\begin{definition}[IO-substitution]
  Let us consider the alphabets $\Sigma_1$ and $\Sigma_2$, and two languages
  $L_1 \subseteq \Sigma_1^*$ and $L_2 \subseteq \Sigma_2^*$. Given a word $w \in L_2$, and a symbol $a \in \Sigma_1$, we define the homomorphism $io_{a, w}$ based on the function
  $$\begin{array}{cccl}
    io_{a, w} : & \Sigma_1 & \to & \Sigma_2^*\\
    & c & \mapsto & \begin{cases} w & \text{if } c=a\\ c & \text{otherwise}
  \end{cases}
\end{array}$$

 We define the relation of IO-substitution as
 \begin{center}
   $L_1[a:=L_2]_{IO} \rightarrow L$ iff $L=\bigcup_{w \in L_2}io_{a, w}(L_1)$
 \end{center}
 and we call $L_1[a:=L_2]_{IO}$ the \emph{IO-substitution of $a$ by $L_2$ in $L_1$}.
\end{definition}

Note, if for every word $w \in L_1$, $a$ has no occurrence in $w$ (\textit{i.e.}~$|w|_a=0$) then $L_1[a:=L_2]_{IO} \rightarrow L_1$ is verified.

In the rest of the document, we will use the notation $d \rightarrow^\ast L$ iff, either $d = L$; or $d=d_1[x:=d_2]_{IO}$, such that $d_1 \rightarrow^* L_1$, $d_2 \rightarrow^\ast L_2$ and $L_1[x:=L_2]_{IO} \rightarrow L$. 

\begin{example}\label{ex:io-languages}
Let us consider the languages $L_1=a^*$ and $L_2=ab+c$; the language $L$ such that $L_1[a:=L_2]_{IO} \rightarrow L$ is then defined as $(ab+c)^*$. In this case, $L$ is a regular language, just like $L_1$ and $L_2$.

Another more interesting example is $L=\{a^p \mid p \text{ is not a prime number}\}$, which verifies $xx^\ast x[x:= aa^\ast a]_{IO} \rightarrow L$. Such a language is not semilinear since its Parikh image is equal to $\{nm \langle 1 \rangle \mid n,m >1\}$. Therefore $L$ does not belong to \textbf{RL}, while $L_1$ and $L_2$ do.
\end{example}

\begin{definition}[\textbf{IO(L)}]
  Given a class of languages $\mathbf{L}$, we define the class $IO_n(\mathbf{L})$ by induction on $n \in \mathbb{N}$ as

\begin{enumerate}
\item $IO_0(\mathbf{L}) =  \mathbf{L}$
\item for $n \geq 0$, $$IO_{n+1}(\mathbf{L})=IO_n(\mathbf{L}) \cup \bigcup_{L_1, L_2 \in IO_n(\mathbf{L})} \bigcup_{x \in \Sigma_1}\{L \mid L_1[x:=L_2]_{IO} \rightarrow L\}$$
where $\Sigma_1$ is the alphabet of $L_1$
\end{enumerate}

The smallest class of languages containing $\mathbf{L}$ and closed under IO-substitution is defined by $IO(\mathbf{L}) = \{L \in IO_n(\mathbf{L}) \mid n \in \mathbb{N}\}$
\end{definition}

We introduce the notion of derivations and derivation trees associated to a language in \textbf{IO(L)}.

\begin{definition}[Derivation]
  Given a language $L$ in $\mathbf{IO(L)}$, we define the set of \emph{derivations} $\mathcal{D}_L$ and the set of \emph{derivation trees} $\mathcal{T}_L$ associated to $L$ as the smallest sets such that:
  \begin{itemize}
    \item if $L \in \mathbf{L}$, then $L \in \mathcal{D}_L$; and $t=n \in \mathcal{T}_L$, where $n$ is a node labelled with $L$;
    \item if $L_1[x:=L_2]_{IO} \rightarrow L$ for $L_1, L_2 \in \mathbf{IO(L)}$, then, $\{d_1[x:=d_2]_{IO} \mid d_1 \in \mathcal{D}_{L_1}, d_2 \in \mathcal{D}_{L_2}\} \subseteq \mathcal{D}_L$; and given $n$ a node labelled with $x$, $\{n(t_1, t_2) \mid t_1 \in \mathcal{T}_{L_1}, t_2 \in \mathcal{T}_{L_2}\} \subseteq \mathcal{T}_L$.
  \end{itemize}
\end{definition}


\begin{example}
  Let us consider some languages $L_i \in \mathbf{L}$ for $1 \leq i \leq 5$ and $$((L_1[x_1:=L_2]_{IO})[x:=(L_3[x_2:=L_4]_{IO})]_{IO})[y:=L_5]_{IO} \rightarrow^* L$$
\begin{figure}
 \begin{center}
    \begin{tikzpicture}[level distance=7mm,
      level 1/.style={sibling distance=3cm},
      level 2/.style={sibling distance=3cm},
      level 3/.style={sibling distance=1.7cm},
      level 4/.style={sibling distance=1cm}]
      \tikzstyle{edge from parent}=[draw]
      \node (b0) {$y$}
      child {node (b1) {$x$}
        child {node (b11) {$x_1$}
          child {node (b111) {$L_1$}}
          child {node (b112) {$L_2$}}}
        child {node (b12) {$x_2$}
          child {node (b121) {$L_3$}}
          child {node (b122) {$L_4$}}}}
      child {node (b2) {$L_5$}}
      ;
    \end{tikzpicture}
  \end{center}
  \caption{Example of a derivation tree associated to a language in \textbf{IO(L)}}
  \label{fig:tree-representation-io}
\end{figure}
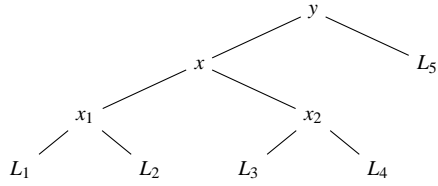
 The corresponding derivation tree is represented by the binary tree in Figure~\ref{fig:tree-representation-io}.
\end{example}

From the definition above, it is obvious that the set of derivations and the set of derivation trees are isomorphic. We will not use the notion of derivation tree in the rest of the article, but we will allow ourself to speak about a subderivation to speak about the derivation associated to a subtree of a given derivation tree.

In the rest of the document, we adopt a convention of left-associativity for the IO-substitution operation: a derivation $L_1[x_1:=L_2]_{IO}[x_2:=L_3]_{IO}$ will denote the derivation $(L_1[x_1:=L_2]_{IO})[x_2:=L_3]_{IO}$.

As pointed out in \cite{Bourreau-IOsubstitution}, the IO-substitution operation can be seen as a restriction of the copying power of IO-macro grammars in \cite{fischerphd,fischer1968grammars}. Indeed, the authors gave a grammatical formalism in terms of abstract categorial grammars \cite{degroote01,muskens01} which generates languages in \textbf{IO(MCLF)}, and the construction exhibits the use of copies in a non-recursive way, \textit{i.e.} the use of a bounded number of copies; this restriction leads, for instance, to exclude languages like $L_{sq}\{a^{n^2} \mid n \in \mathbb{N}\}$ from \textbf{IO(MCFL)}, while such a language is known to be generated by IO-macro grammars (and also by parallel MCFGs \cite{seki91}, another formalism which enriches MCFGs with deletion and copying operations). One will note that $L_{sq}$ is not a constant-growth language.

The property is even stronger as the IO-substitution preserves the constant-growth property of languages under some constraints, as given in the following theorem:

\begin{theorem}\cite{Bourreau-IOsubstitution}
  Given \textbf{L} an abstract family of semilinear languages:
  \begin{itemize}
    \item \textbf{IO(L)} is a family of  constant-growth languages
    \item \textbf{IO(L)} is closed by homomorphism, intersection with regular sets, finite union and concatenation.
  \end{itemize}
\end{theorem}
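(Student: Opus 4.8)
\emph{Proof plan.} I would prove both items by induction on the depth of an IO-derivation $d$ with $d \rightarrow^\ast L$ (equivalently, on the least $n$ with $L \in IO_n(\textbf{L})$), each time pushing the property in question down to the leaves of $d$, where $\textbf{L}$ — being an AFL of semilinear languages — already supplies it. At an inductive step $d$ has the form $L = L_1[a:=L_2]_{IO}$ with $L_1, L_2$ of smaller depth, and the workhorse is the identity $io_{a,w}(u_1 u_2) = io_{a,w}(u_1)\, io_{a,w}(u_2)$ together with the fact that $io_{a,w}$ fixes every letter other than $a$; in particular $io_{a,w}(M_1 \star M_2) = io_{a,w}(M_1) \star M_2$ for $\star \in \{\cup, \cdot\}$ whenever $a$ does not occur in $M_2$, and a similar law holds for homomorphisms. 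By the left-associativity convention, any derivation also decomposes as $L = \ell_0[x_1:=\sigma_1]_{IO}\cdots[x_r:=\sigma_r]_{IO}$ with $\ell_0 \in \textbf{L}$ and $\sigma_1,\dots,\sigma_r \in \textbf{IO(L)}$, which I use below.

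For the \emph{constant-growth} item the base case is that a semilinear language has an eventually periodic length set, which (if infinite) contains an infinite arithmetic progression, and any superset of an infinite arithmetic progression is constant-growth with the same modulus. For the step $L = L_1[a:=L_2]_{IO}$, a word $w$ of $L$ is $io_{a,w_2}(w_1)$ with $|w| = |w_1| + |w_1|_a(|w_2| - 1)$, so $\{|w| : w \in L\}$ equals $\{\ell + m(j-1) : (\ell,m) \in S_1,\ j \in \{|w_2| : w_2 \in L_2\}\}$ where $S_1 = \{(|w_1|, |w_1|_a) : w_1 \in L_1\} \subseteq \mathbb{N}^2$; if $S_1$ contained an unbounded ``linear'' family $\{(\ell_0, m_0) + n(p,q) : n \in \mathbb{N}\}$ with $p \geq 1$, then fixing a shortest word $w_2^\ast$ of $L_2$, of length $j^\ast$, would produce the infinite progression $\big(\ell_0 + m_0(j^\ast-1)\big) + n\big(p + q(j^\ast-1)\big)$ inside $\{|w| : w \in L\}$, and we would be done. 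The main obstacle is exactly this: $L_1 \in \textbf{IO(L)}$ need not be semilinear, so $S_1$ need not be either, and one cannot extract such a family for free. The induction therefore has to be run with a strictly stronger invariant on the Parikh image, one that is both preserved by IO-substitution and strong enough to yield such a family; pinning down and propagating this invariant is the real content, and in the present article it is subsumed by the characterisation of $\vect{p}(L)$ as universally-semilinear in Section~\ref{section:parikh}, from which constant-growth follows as a corollary.

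For closure under \emph{union, concatenation and homomorphism} I would first observe that closure under renaming of letters is immediate by induction (renaming is an alphabetic homomorphism applied to the base languages, with the substitution variables renamed accordingly). Given $L_a, L_b \in \textbf{IO(L)}$, rename them over disjoint alphabets and with disjoint substitution variables, write $L_a = \ell_0[x_1:=\sigma_1]_{IO}\cdots[x_r:=\sigma_r]_{IO}$ as above, and prove $\ell_0 \star L_b \in \textbf{IO(L)}$ for $\star \in \{\cup, \cdot\}$ by a side induction on the derivation of $L_b$ — the base case using closure of $\textbf{L}$ under $\star$, the step using $\ell_0 \star (L_b'[c:=L_b'']_{IO}) = (\ell_0 \star L_b')[c:=L_b'']_{IO}$ since $c$ is absent from the alphabet of $\ell_0$. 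Re-applying $[x_1:=\sigma_1]_{IO}, \dots, [x_r:=\sigma_r]_{IO}$, which by disjointness acts only on the $\ell_0$-component, then reconstructs $L_a \star L_b$. For a homomorphism $h$, at a step $L = L_1[a:=L_2]_{IO}$ I introduce a fresh symbol $a'$, put $h'(c) = h(c)$ for $c \neq a$ and $h'(a) = a'$, and check $h(L) = h'(L_1)[a':=h(L_2)]_{IO}$, with $h'(L_1)$ and $h(L_2)$ in $\textbf{IO(L)}$ by the induction hypothesis.

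The subtle closure is \emph{intersection with a regular set} $R$, recognised by an automaton $A$ with states $Q$. I would attach to each $M \in \textbf{IO(L)}$ a ``run-decorated'' version in which every letter records the portion of an $A$-run it carries, legality of transitions being imposed only on the letters that will not be substituted; at a leaf this decorated language is obtained from $M$ by an inverse homomorphism followed by intersection with the regular set of correctly chained decorations — both available since $\textbf{L}$ is an AFL — and at a step $M = M_1[b:=M_2]_{IO}$ by IO-substituting, inside the decoration of $M_1$, the decorated copies of $b$ by the corresponding decorations of $M_2$; finally $L \cap R$ is the decoration-erasing homomorphic image of the decorations of $L$ whose runs go from the initial state to a final one, hence lies in $\textbf{IO(L)}$ by the homomorphism case. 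The point needing care rather than the naive construction is that an IO-substitution replaces \emph{all} occurrences of $b$ by a single common word, whereas distinct occurrences of $b$ may sit on different fragments of the run; the decoration must therefore record not a single state pair per letter but the whole state-transition it induces (a transition-monoid style annotation), so that one decorated word of $M_2$ can serve every pair of states simultaneously. Making this bookkeeping precise is, I expect, the most laborious part of the argument.
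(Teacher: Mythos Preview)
This theorem is not proved in the present paper: it is quoted verbatim from \cite{Bourreau-IOsubstitution} and carries no proof environment here. Consequently there is nothing in the paper to compare your closure arguments against; the union/concatenation/homomorphism sketches you give are plausible and your diagnosis of the regular-intersection case (that a state-pair annotation is too weak because all occurrences of the substituted letter receive the \emph{same} word, forcing a transition-monoid style decoration) is the right insight, but any verdict on the details would have to be checked against \cite{Bourreau-IOsubstitution}, not this article.

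For the constant-growth item the situation is slightly different. The paper does supply an \emph{alternative} argument, namely the chain of results in Section~\ref{section:parikh}: every $L \in \mathbf{IO(L)}$ has a factored-semilinear Parikh image (Proposition~\ref{th:IO-factorized}), every factored-semilinear set is universally-linear, and universally-linear (hence existentially-linear) implies constant-growth. You anticipate this exactly: you correctly observe that a naive length-based induction on $L_1[a:=L_2]_{IO}$ stalls because $L_1$ need not be semilinear, so one cannot extract a linear family from $S_1=\{(|w_1|,|w_1|_a)\}$ for free, and you point to the stronger Parikh-level invariant of Section~\ref{section:parikh} as the fix. That is precisely the route the paper takes; the only thing to add is that the paper packages the invariant as ``factored-semilinear'' (closed under the $\mathrm{Wtt}_k + \mathrm{On}_k\cdot(-)$ combination) rather than working directly with ``universally-linear'', and then proves the implication factored $\Rightarrow$ universally-linear separately.
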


We now investigate a precise characterisation of the Parikh images of languages in \textbf{IO(L)}, and give an alternative proof of the constant-growth property for the languages in this class. In order to do so, we will give a natural extension of semilinear sets, in terms of functions.


\section{IO-MCFLs have factorized Parikh images}\label{section:parikh}
\subsection{Constant-growth and parametrised-growth}

As mentioned in the previous section, the Parikh image of \textbf{IO(L)} goes beyond semilinear sets, while being captured by the notion of constant-growth. In the next section, we generalise the notion of semilinear sets to parametrised sets. Indeed, one can see a linear set $V = \{\vect{v_0}+x_1\vect{v_1}+ \dots + x_n\vect{v_n} \mid x_1, \dots, x_n\}$, where $\vect{v_0}, \dots, \vect{v_n} \in \mathbb{N}^p$, as the image of the function $f:\mathbb{N}^n \mapsto \mathbb{N}^p$ such that $f(x_1, \dots, x_n) = \vect{v_0}+x_1\vect{v_1}+ \dots + x_n\vect{v_n}$, hence parametrised by the variables $x_1, x_2, \dots, x_n$.

\begin{definition}[Vector function]
  We define a (multi-valued) \emph{vector function} $F$ as a function whose domain is $\mathbb{N}^n$ and whose codomain is $\mathbb{N}^p$, and such that there exists $m \in \mathbb{N}$ for which:
$$F(x_1, \dots, x_n)=\sum_{1 \leq j \leq m }f_j(x_1, \dots, x_n)\vect{v_j}$$
where $f_j: \mathbb{N}^n \to \mathbb{N}$ and $\vect{v_j}$ is a vector of $\mathbb{N}^p$, for all $1 \leq j \leq m$.
\end{definition}

A linear function $F:\mathbb{N}^n \to \mathbb{N}^p$ is therefore a particular case of vector functions, because it has the form $F(x_1, \dots, x_n)=\vect{v_0} + \sum_{1 \leq j \leq n}x_j\vect{v_j}$.

\begin{definition}[Functional vector-set]
  A \emph{functional vector-set} $E$ is defined as the finite union of the image of some vector functions: $E=\bigcup_{1 \leq i \leq k} \mathrm{Im}(F_i)$.

  A language which Parikh image is a functional vector-set will be called a \emph{language with parametrised growth}.
\end{definition}

Following this definition, a semilinear set can be alternatively defined as a finite union $\bigcup_{1 \leq i \leq k} \mathrm{Im}(F_i)$, where $F_i$ is a linear function, for every $1 \leq i \leq k$. In the rest of the article, and without loss of generality, when considering a functional vector-set defined by $\bigcup_{1 \leq i \leq k} \mathrm{Im}(F_i)$, we will assume that the functions $F_i$, $1 \leq i \leq k$ share the same codomain.

The definition above is too general to only capture the Parikh image of languages which verify the constant-growth property, as shown with the following example.

\begin{example}\label{ex:parameter-growth}
The language $L_{sq}=\{a^{n^2} \mid n \in \mathbb{N}\}$ has a parametrised growth: its Parikh image is given by $\mathrm{Im}(F)$, where $F(x)=x^2 \langle 1 \rangle$;it is easy to see that $L_{sq}$ does not verify the constant-growth property.
\end{example}

We next define specific functional vector-sets which approximates the ideas behing the constant-growth property.

\begin{definition}[$i$-linear vector function]
  Given a vector function $F:\mathbb{N}^n \to \mathbb{N}^m$ and $1 \leq i \leq n$, $F$ is said \emph{$i$-linear}, if given $F(x_1, \dots, x_n) = \sum_{1 \leq j \leq m} f_j(x_1, \dots, x_n)\vect{v_j}$,  for every $1 \leq j \leq m$ we have:
  $$f_j(x_1, \dots, x_n)=A_j(x_1, \dots, x_{i-1}, x_{i+1}, \dots, x_n)x_i + B_j(x_1, \dots, x_{i-1}, x_{i+1}, \dots, x_n)$$
where $A_j, B_j:\mathbb{N}^{n-1} \to \mathbb{N}$.

We say that $F$ is:
\begin{itemize}
  \item \emph{existentially-linear} if there exists $1 \leq i \leq n$, for which $F$ is $i$-linear.
  \item \emph{universally-linear} if for every $1 \leq i \leq n$, $F$ is $i$-linear.
  \end{itemize}
\end{definition}

These definitions are then naturally extended to sets of vectors:

\begin{definition}[Existentially- and Universally-semilinear sets]
 
 A functional vector-set $E =  \bigcup_{1 \leq i \leq k} \mathrm{Im}(F_i)$ is said to be:
  \begin{itemize}
    \item existentially-semilinear if there exists $1 \leq i \leq k$ for which $F_i$ is existentially-linear. 
    \item universally-semilinear if for every $1 \leq i \leq k$, $F_i$ is universally-linear.
 \end{itemize}
\end{definition}

It is then obvious that a universally-linear vector function is existentially-linear. Similarly, a universally-linear set is existentially-linear.

\begin{lemma}
  Given a language $L$, if $L$ is existentially-linear then $L$ is constant-growth.
\end{lemma}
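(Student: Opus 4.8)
The plan is to reduce constant growth to a statement purely about the set of word lengths of $L$, and then to read off an infinite arithmetic progression of such lengths directly from the hypothesis. Recall that ``$L$ is existentially-linear'' means that $\vect{p}(L)$ is an existentially-semilinear set, say $\vect{p}(L) = \bigcup_{1 \le i \le k} \mathrm{Im}(F_i)$ with at least one of the $F_i$ existentially-linear. First I would record that, by Definition~\ref{def:constant_growth}, whether $L$ is constant growth depends only on the length set $\Lambda(L) = \{\,|w| \mid w \in L\,\}$: the definition only mentions lengths, so $L$ is constant growth iff there are $k,c \in \mathbb{N}$ such that every $\ell \in \Lambda(L)$ with $\ell > k$ admits some $\ell' \in \Lambda(L)$ with $\ell < \ell' \le \ell + c$. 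Since $|w| = \sum_{a \in \Sigma} |w|_a = \sum_{a \in \Sigma} \vect{p}(w)[a]$, writing $|\vect{v}| = \sum_{a \in \Sigma} \vect{v}[a]$ for the ``length'' of a vector we get $\Lambda(L) = \{\,|\vect{v}| \mid \vect{v} \in \vect{p}(L)\,\}$. Hence it suffices to exhibit some $\ell_0 \in \mathbb{N}$ and $d \ge 1$ with $\{\,\ell_0 + td \mid t \in \mathbb{N}\,\} \subseteq \Lambda(L)$: for any $\ell \in \Lambda(L)$ with $\ell > \ell_0$, the least member of that progression exceeding $\ell$ lies in $(\ell,\ell+d]$ and is in $\Lambda(L)$, so $L$ is constant growth with $k=\ell_0$, $c=d$. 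Note one such progression, coming from a single $F_i$, already does the job regardless of the other $F_i$'s.

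Second I would extract that progression. Pick an existentially-linear $F = F_{i_0}\colon \mathbb{N}^n \to \mathbb{N}^m$, written $F(x_1,\dots,x_n) = \sum_{1 \le j \le m} f_j(x_1,\dots,x_n)\,\vect{v_j}$. By definition $F$ is $i$-linear for some $i$; after a harmless renaming of the arguments of $F$ take $i = n$, so $f_j(x_1,\dots,x_n) = A_j(\vect{x}')\,x_n + B_j(\vect{x}')$ with $\vect{x}' = (x_1,\dots,x_{n-1})$. Then, purely by rearranging the sum, $F(\vect{x}', t) = t\cdot\vect{u}(\vect{x}') + \vect{w}(\vect{x}')$ for every $\vect{x}'$ and every $t \in \mathbb{N}$, where $\vect{u}(\vect{x}') = \sum_j A_j(\vect{x}')\vect{v_j}$ and $\vect{w}(\vect{x}') = \sum_j B_j(\vect{x}')\vect{v_j}$. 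So for any fixed $\vect{x}'_0$ the set $\{\,t\cdot\vect{u}(\vect{x}'_0) + \vect{w}(\vect{x}'_0) \mid t \in \mathbb{N}\,\}$ is contained in $\mathrm{Im}(F) \subseteq \vect{p}(L)$, and since $|\cdot|$ is additive, $\{\,t\cdot|\vect{u}(\vect{x}'_0)| + |\vect{w}(\vect{x}'_0)| \mid t \in \mathbb{N}\,\} \subseteq \Lambda(L)$. If $\vect{u}(\vect{x}'_0) \ne \vect{0}$ this is the arithmetic progression sought, with $d = |\vect{u}(\vect{x}'_0)| \ge 1$ and $\ell_0 = |\vect{w}(\vect{x}'_0)|$, and we are done by the first paragraph.

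The single delicate point — and the one I expect to be the real obstacle — is guaranteeing that some choice of $\vect{x}'_0$ makes $\vect{u}(\vect{x}'_0) \ne \vect{0}$, i.e.\ that the distinguished variable $x_i$ genuinely influences $F$, equivalently that the vector-valued coefficient $\vect{x}' \mapsto \sum_j A_j(\vect{x}')\vect{v_j}$ is not the zero function; otherwise the progression degenerates to a single point (this is exactly what happens for pathologies such as $F(x_1,x_2)=x_1^2\vect{v}$, which is vacuously $2$-linear). I would dispose of it by adopting, without loss of generality, the convention that vector functions are presented without dummy arguments, under which $x_i$ does influence $F$ and hence $\sum_j A_j(\vect{x}')\vect{v_j} \ne \vect{0}$ for some $\vect{x}'$. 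Everything else is elementary: the only structural fact used is that an $i$-linear vector function restricted to a line in the $x_i$-direction is an ordinary affine map $\mathbb{N} \to \mathbb{N}^m$, whose image is a genuine arithmetic progression of vectors.
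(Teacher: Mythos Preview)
Your approach is essentially the paper's: fix all but the distinguished variable of the existentially-linear $F_{i_0}$ and read off an arithmetic progression of word lengths inside $\Lambda(L)$, from which constant growth follows immediately. The degeneracy you flag (a dummy $x_i$, giving step size zero) is not addressed in the paper's proof either --- it tacitly assumes the resulting step $A$ is positive --- so your explicit ``no dummy arguments'' convention is, if anything, a gain in rigor over the original.
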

\begin{proof}
  Let us consider such a language $L$; then $\vect{p}(L)=\bigcup_{1 \leq i \leq k} \mathrm{Im}(F_i)$ and the existence of some  $1 \leq i \leq k$ such that $F_i$ is existentially-linear, \textit{i.e.}, for $\mathrm{Dom}(F_i)=\mathbb{N}^n$ and $F(x_1, \dots, \dots, x_n) = \sum_{1 \leq j \leq m} f_j(x_1, \dots, x_n) \vect{v_j}$ there exists $1 \leq l \leq n$ such that, for every $1 \leq j \leq m$:
$$f_j(x_1, \dots, x_n)=A_j(x_1, \dots, x_{l-1}, x_{l+1}, \dots, x_n)x_l + B_j(x_1, \dots, x_{l-1}, x_{l+1}, \dots, x_n)$$
Let us consider $c_1, \dots, c_n \in \mathbb{N}$. We then write
$$A=A_j(c_1, \dots, c_{l-1}, c_{l+1}, \dots, c_n)$$
$$B=B_j(c_1, \dots, c_{l-1}, c_{l+1}, \dots, c_n)$$
$$K=Ac_l+B$$

Then we can build a (increasing) sequence of words $(w_i)_{i \in \mathbb{N}}$, such that $|w_i|=K+iA$, for every $i \in \mathbb{N}$. Therefore, given a word $w \in L$ such that $|w|>K$, we can find $i \in \mathbb{N}$ such that $|w_i| < |w| \leq |w_{i+1}|$, \textit{i.e.} $|w_i| < |w| \leq |w_{i}|+C$.
\end{proof}

The family of existentially-linear sets seems to be the biggest family of constant-growth sets of vectors definable from the definition of functional vector-sets. Interestingly enough, existentially-linear sets show how big is the gap between constant-growth and semilinear languages. As an example, consider the language $\{a^{n^2}b^mc^{nm} \mid n,m \in \mathbb{N}\}$, whose Parikh image is given by $\mathrm{Im}(F)$, where $F(x_1, x_2)=x_1^2 \langle 1, 0, 0\rangle +x_2 \langle 0, 1, 0 \rangle + x_1x_2 \langle 0, 0, 1 \rangle$. Then $F$ is existentially-linear (for $x_2$) and therefore constant-growth.

Languages which Parikh images are existentially-linear vector sets can be seen as languages which have a ``linear sub-basis''. Indeed, the definition of an existentially-linear set of vectors states that an infinite subset of it verifies a linear growth.  As a particular case and if we only consider the formal definition of mildly context-sensitivity, formalisms which allow copy mechanisms should be considered as candidates for mildly-context sensitive formalisms as soon as they ensure such a linear sub-basis in the languages generated. Such a property might be interesting in the description of natural language syntax, in case one wants to describe ellipsis through copying operations~\cite{Sarkar-Joshi1996,Kobele07ParsingEllipsis,bourreau13}, or to integrate copying phenomena appearing, for instance in Yes-No questions in Mandarin~\cite{radzinski90}, or in relatives in Bambara~\cite{culy87}.

Finally, we can remark that the notion of universally-linear language seems to be closer to the ideas expressed in the following revision of the constant-growth property of~\cite{kallmeyer10}, where a language is constant-growth if there exists a constant $c \in \mathbb{N}$, such that for every word $w \in L$ verifying $|w|>c$, there are vectors $\vect{v_1}$ and $\vect{v_2}$ for which $\vect{p}(w)=\vect{v_1}+\vec{v_2}$ and for every $k \geq 1$, $\vect{v_1} + k\vect{v_2} \in \vect{p}(L)$.
Indeed, any word $w$ in a universally-linear language belongs to a sublanguage which verifies the constant-growth property, and this sublanguage is given by the vector function associated to $w$.

\subsection{Factored Parikh image}\label{part:factorized-parikh}

We now give a precise characterisation of the Parikh images of languages in \textbf{IO(L)}. We will prove that such images are particular cases of universally-linear sets. This result leads to a proof of the constant-growth property of these languages, which differs from the one given in~\cite{Bourreau-IOsubstitution}.

In what follows, we denote by $\mathcal{F}(\mathbb{N}^n, \mathbb{N}^m)$ the set of vector functions whose domain is $\mathbb{N}^n$ and whose codomain is $\mathbb{N}^m$. Moreover, given a vector $\vect{v}=\langle v_1, \dots, v_n \rangle$ on $\mathbb{N}^n$ and an integer $1 \leq k \leq n$, we write $\vect{v}[k] = v_k$ and $\vect{v}|_k = \langle v_1, \dots, v_{k-1}, 0, v_{k+1}, \dots, v_n \rangle$.

\begin{definition}
  Let us consider a vector function $F:\mathbb{N}^n \mapsto \mathbb{N}^m$, and $1 \leq k \leq m$, we define the functions:
  \begin{itemize}
    \item $\mathrm{Wtt}_k: \mathcal{F}(\mathbb{N}^n, \mathbb{N}^m) \to \mathcal{F}(\mathbb{N}^{n}, \mathbb{N}^{m})$ such that $\mathrm{Wtt}_k(F)(x_1, \dots, x_n) = F(x_1, \dots, x_n)|_k$
    \item $\mathrm{On}_k: \mathcal{F}(\mathbb{N}^n, \mathbb{N}^m) \to \mathcal{F}(\mathbb{N}^{n}, \mathbb{N})$ such that $\mathrm{On}_k(F)(x_1, \dots, x_n)= F(x_1, \dots, x_n)[k]$.
  \end{itemize}
\end{definition}

When a vector will be associated to the Parikh image of a language, we will allow ourselves to index these two functions by the letter corresponding to the dimension, and use the notations $\mathrm{Wtt}_x$ and $\mathrm{On}_x$.

From the functions $\mathrm{Wtt}_k$ and $\mathrm{On}_k$, we define the following notion of a factored-semilinear Parikh image.

\begin{definition}[Factored vector function]
A vector function $F:\mathbb{N}^n \to \mathbb{N}^m$ is said to be \emph{factored} if the following induction stands:
\begin{enumerate}
  \item $F$ is a linear function, or
  \item there exist $1 \leq k \leq m$, $F_1:\mathbb{N}^{n_1} \to \mathbb{N}^m$ and $F_2:\mathbb{N}^{n_2} \to \mathbb{N}^m$, factored vector functions such that $n=n_1 + n_2$ and $$F(x_1, \dots, x_{n_1+n_2})=\mathrm{Wtt}_k(F_1)(x_1, \dots, x_{n_1}) + \mathrm{On}_k(F_1)(x_1, \dots, x_{n_1})F_2(x_{n_1+1}, \dots x_{n_1+n_2})$$
\end{enumerate}
\end{definition}

In the rest of the document, we allow ourselves to write $\mathrm{Wtt}_k(F_{j_1}) + \mathrm{On}_k(F_{j_1})F_{j_2}$ for a function as in 2. in the definition above.

\begin{definition}[Factored-semilinear set]
A vector set $E$ is \emph{factored-semilinear}, if the following induction stands
\begin{enumerate}
  \item it is a semilinear set, or
  \item it is of the form $$\bigcup_{1 \leq j_1 \leq m_1} \bigcup_{1 \leq j_2 \leq m_2} \mathrm{Im}(\mathrm{Wtt}_k(F_{j_1}) + \mathrm{On}_k(F_{j_1})F_{j_2})$$ where $\bigcup_{1 \leq j_1 \leq m_1}F_{j_1}$ and $\bigcup_{1 \leq j_2 \leq m_2}F_{j_2}$ are factored-semilinear sets; and for $\mathbb{N}^n$ the codomain of $F_{j_1}$ (for every $1 \leq j_1 \leq m_1$), we have $1 \leq k \leq m$.
\end{enumerate}
\end{definition}

We now prove that $\mathbf{IO(L)}$ is full and complete for factored-semilinear sets, if $\mathbf{L}$ is full and complete for linear sets.

\begin{proposition}\label{th:IO-factorized}
  Every language $L \in \mathbf{IO(L)}$ has a factored-semilinear Parikh image.
\end{proposition}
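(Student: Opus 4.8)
The plan is to proceed by induction on the structure of the derivation $d$ with $d \rightarrow^* L$, matching the inductive definition of $\mathbf{IO(L)}$ and of factored vector functions. The base case is when $L \in \mathbf{L}$: since $\mathbf{L}$ is an abstract family of semilinear languages, $\vect{p}(L)$ is semilinear, hence a finite union of images of linear functions, hence factored-semilinear by clause~1 of the definition. The inductive step treats a derivation $d = d_1[x := d_2]_{IO}$ with $d_1 \rightarrow^* L_1$, $d_2 \rightarrow^* L_2$, and $L_1[x:=L_2]_{IO} \rightarrow L$; by the induction hypothesis $\vect{p}(L_1)$ and $\vect{p}(L_2)$ are factored-semilinear, and I must show $\vect{p}(L)$ is factored-semilinear.

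The heart of the argument is a Parikh-image computation for a single IO-substitution. Recall $L = \bigcup_{w \in L_2} io_{x,w}(L_1)$, so a word of $L$ is obtained from a word $u \in L_1$ by replacing each of its $|u|_x$ occurrences of $x$ by a word from $L_2$ (independently — this is the IO, ``inside-out'', nature). Hence at the level of Parikh images, $\vect{p}(io_{x,w}(u)) = \vect{p}(u)|_x + |u|_x \cdot \vect{p}(w)$, where the $|_x$ operation zeroes out the $x$-coordinate. Writing $\vect{p}(L_1) = \bigcup_{j_1} \mathrm{Im}(F_{j_1})$ and $\vect{p}(L_2) = \bigcup_{j_2} \mathrm{Im}(F_{j_2})$ as factored-semilinear sets — after first padding all codomains to a common alphabet so that the $x$-dimension exists on both sides — one gets
$$\vect{p}(L) = \bigcup_{j_1}\bigcup_{j_2} \mathrm{Im}\bigl(\mathrm{Wtt}_x(F_{j_1}) + \mathrm{On}_x(F_{j_1}) F_{j_2}\bigr),$$
which is exactly clause~2 of the definition of factored-semilinear set, with $k$ the index of the dimension $x$. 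A small point to check here is that each individual occurrence of $x$ may be replaced by a \emph{different} $w \in L_2$, whereas the displayed formula uses a single $F_{j_2}$ with fresh parameters; this is fine because the vector function $F_2$ in the definition of a factored vector function already absorbs multiplicity via the scalar $\mathrm{On}_x(F_{j_1})$, and summing $m$ independent vectors each from $\mathrm{Im}(F_{j_2})$ still lands in a functional vector-set of the required factored shape (one re-groups the parameters; formally one may need to argue that a sum of $k$ copies of a factored function, with $k$ itself a factored function of the outer parameters, remains factored — I would isolate this as a short auxiliary closure observation, or alternatively note that $\mathrm{On}_x(F_{j_1})$ being a single $\mathbb{N}$-valued vector-function coordinate already handles the bookkeeping).

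The main obstacle I anticipate is precisely this bookkeeping around ``different occurrences, different substituends'': naively, $F(x_1,\dots,x_{n_1}, x_{n_1+1},\dots)$ must be shown to be a bona fide vector function (a finite sum $\sum_j f_j \vect{v_j}$ with $f_j : \mathbb{N}^n \to \mathbb{N}$), and the product $\mathrm{On}_x(F_{j_1}) \cdot F_{j_2}$ of two such objects expands to a finite sum of products of coordinate-functions times vectors, which is again of the right form — so the class of vector functions is closed under this ``scalar-times-vector-function'' product, and the class of factored vector functions is closed under the clause-2 operation by construction. The remaining routine verifications are: (i) the codomain-padding step does not disturb factored-semilinearity (adding zero coordinates to every $\vect{v_j}$); (ii) the degenerate case where $x$ does not occur in any word of $L_1$, where $L = L_1$ and there is nothing to prove; and (iii) that the induction on derivations is well-founded, which is immediate since subderivations are strictly smaller. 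I would present the single-substitution Parikh computation as the key lemma, then the induction is a two-line wrap-up.
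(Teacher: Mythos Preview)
Your induction and your Parikh-image formula coincide with the paper's proof: the paper also inducts (on the level $n$ with $L \in IO_n(\mathbf{L})$), uses semilinearity of $\mathbf{L}$ for the base case, and in the inductive step obtains exactly
\[
\vect{p}(L) \;=\; \bigcup_{j_1}\bigcup_{j_2}\ \mathrm{Im}\bigl(\mathrm{Wtt}_x(F_{j_1}) + \mathrm{On}_x(F_{j_1})\, F_{j_2}\bigr)
\]
from $\vect{p}(io_{x,w_2}(w_1)) = \vect{p}(w_1)|_x + |w_1|_x\cdot\vect{p}(w_2)$.

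The one issue is that you have misread the IO-substitution, and this misreading is exactly what generates your ``main obstacle''. By definition $io_{x,w}$ is a \emph{homomorphism}, so every occurrence of $x$ in a given $u \in L_1$ is replaced by the \emph{same} word $w$; the outer union $\bigcup_{w \in L_2}$ ranges over the choice of that single $w$. There is no independence between occurrences --- your parenthetical has the IO/OI intuition reversed: IO (inside-out) is call-by-value, one evaluated argument copied everywhere, whereas OI is the call-by-name variant in which different occurrences may receive different values. (The paper's non-prime example makes this concrete: $xx^\ast x[x:=aa^\ast a]_{IO}$ yields $\{a^{nm}\mid n,m>1\}$ rather than all of $a^{\geq 4}$ precisely because every $x$ receives the same $a^m$.) Once this is corrected, your displayed formula is already exact, the ``small point to check'' and the ``auxiliary closure observation'' evaporate, and your argument is complete and matches the paper's.
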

\begin{proof}
  By definition, there exists $n \in \mathbb{N}$ such that $L \in IO_n(\mathbf{L})$. We proceed by induction on $n$:
  \begin{itemize}
    \item if $n=0$, then $L$ belongs to $\mathbf{L}$. By definition $\vect{p}(L)$ is a semilinear set, therefore a factored-semilinear vector set.
    \item otherwise, there exist $L_1, L_2 \in IO_{n-1}(\mathbf{L})$ such that $L_1[x:=L_2]_{IO} \rightarrow L$. Then, for every word $w \in L$, there exist $w_1 \in L_1$ and $w_2 \in L_2$ such that $w=io_{x, w_2}(w_1)$. 
Let us consider $\bigcup_{1 \leq k_1 \leq m_1}\mathrm{Im}(F_{k_1})$ and $\bigcup_{1 \leq k_2 \leq m_2}\mathrm{Im}(F_{k_2})$ as the Parikh images of $L_1$ and $L_2$ respectively. There exists $1 \leq i_1 \leq m_1$ and $1 \leq i_2 \leq m_2$, and $c_1, \dots, c_{n_1}, c'_1, \dots, c'_{n_2}$ in $\mathbb{N}$ such that $F_1(c_1, \dots, c_{n_1})$ is the Parikh image of $w_1$ and $F_2(c'_1, \dots, c'_{n_2})$, the Parikh image of $w_2$. Then it is easy to see that the Parikh image of $w$ is  $$\mathrm{Wtt}_x(F_1)(c_1, \dots, c_{n_1}) + \mathrm{On}_x(F_1)(c_1, \dots, c_{n_1})F_2(c'_1, \dots, c'_{n_2})$$
Moreover, $L_1$ and $L_2$ belong to $IO_{n-1}(\mathbf{L})$ and, by induction hypothesis, their respective Parikh image $\bigcup_{1 \leq k_1 \leq m_1}\mathrm{Im}(F_{k_1})$ and $\bigcup_{1 \leq k_2 \leq m_2}\mathrm{Im}(F_{k_2})$ are factored-semilinear sets. Therefore, $L$ has a factored-semilinear Parikh image.
  \end{itemize}
\end{proof}

\begin{proposition}\label{prop:factorized->io}
If a class of languages $\mathbf{L}$ is complete for semilinear sets, then $\mathbf{IO(L)}$ is complete for factored-semilinear sets.
\end{proposition}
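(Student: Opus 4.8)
The plan is to prove the statement by structural induction on the definition of a factored-semilinear set, building for each such set a language in $\mathbf{IO(L)}$ whose Parikh image is exactly that set. The two cases of the induction mirror the two clauses in the definition of factored-semilinear sets, and the key observation is that the factoring operation $\mathrm{Wtt}_k(F_1) + \mathrm{On}_k(F_1)F_2$ on vector functions is precisely the effect of one IO-substitution on Parikh images, as already computed in the proof of Proposition~\ref{th:IO-factorized}.

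First, for the base case: if $E$ is a semilinear set, then since $\mathbf{L}$ is complete for semilinear sets there is a language $L \in \mathbf{L} \subseteq \mathbf{IO(L)}$ with $\vect{p}(L) = E$, and we are done. For the inductive case, suppose $E = \bigcup_{1 \leq j_1 \leq m_1}\bigcup_{1 \leq j_2 \leq m_2} \mathrm{Im}(\mathrm{Wtt}_k(F_{j_1}) + \mathrm{On}_k(F_{j_1})F_{j_2})$, where $\bigcup_{j_1}\mathrm{Im}(F_{j_1})$ and $\bigcup_{j_2}\mathrm{Im}(F_{j_2})$ are factored-semilinear and $1 \leq k \leq n$ for $\mathbb{N}^n$ the common codomain of the $F_{j_1}$. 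By the induction hypothesis there are languages $L_1, L_2 \in \mathbf{IO(L)}$ with $\vect{p}(L_1) = \bigcup_{j_1}\mathrm{Im}(F_{j_1})$ and $\vect{p}(L_2) = \bigcup_{j_2}\mathrm{Im}(F_{j_2})$; here I must be slightly careful to arrange that $L_1$ and $L_2$ are languages over disjoint alphabets, and that the $k$-th coordinate of $L_1$'s Parikh image corresponds to a letter, say $x$, that does \emph{not} occur in $L_2$ — this can always be secured by renaming letters, which does not change Parikh images up to the obvious relabeling. Then I claim $L_1[x:=L_2]_{IO} \rightarrow L$ for the resulting $L$, and that $\vect{p}(L) = E$: by the same computation as in Proposition~\ref{th:IO-factorized}, for $w_1 \in L_1$ and $w_2 \in L_2$ the word $io_{x,w_2}(w_1)$ has Parikh image $\mathrm{Wtt}_x(\vect{p}(w_1)) + \mathrm{On}_x(\vect{p}(w_1))\,\vect{p}(w_2)$, and ranging over all $w_1, w_2$ and all $(j_1,j_2)$ and all argument tuples realizes exactly the union above. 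Since $\mathbf{IO(L)}$ is closed under IO-substitution, $L \in \mathbf{IO(L)}$, completing the induction.

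The main obstacle is the alphabet-bookkeeping in the inductive step: the definition of factored-semilinear sets speaks of abstract vector functions with a designated coordinate $k$, whereas IO-substitution acts on a \emph{letter} $x$, so I must maintain as part of the induction an explicit correspondence between the coordinates of the Parikh image and the letters of the alphabet, and ensure the substituted letter of $L_1$ is absent from $L_2$ and that the two alphabets are otherwise disjoint. Once this correspondence is set up carefully — most cleanly by strengthening the induction hypothesis to "for every factored-semilinear $E$ over $\mathbb{N}^n$ and every choice of distinct letters naming the $n$ coordinates, there is $L \in \mathbf{IO(L)}$ over exactly those letters with $\vect{p}(L) = E$" — the rest is the routine verification that the $\mathrm{Wtt}/\mathrm{On}$ decomposition matches $io_{x,\cdot}$, which is already essentially done in the proof of Proposition~\ref{th:IO-factorized}. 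A minor subtlety is that $\mathrm{Wtt}_k(F_1)$ and $\mathrm{On}_k(F_1)$ have domain $\mathbb{N}^{n_1}$ while $F_2$ has domain $\mathbb{N}^{n_2}$, so the product function has domain $\mathbb{N}^{n_1+n_2}$; this is exactly the dimension count in clause 2 of the factored-vector-function definition and poses no real difficulty.
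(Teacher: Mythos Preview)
Your proposal is correct and follows essentially the same structural induction as the paper: base case via completeness of $\mathbf{L}$ for semilinear sets, inductive case by applying one IO-substitution $L_1[x:=L_2]_{IO}$ and invoking the Parikh-image computation from Proposition~\ref{th:IO-factorized}. In fact you are more careful than the paper, which simply writes ``it is then easy to see that $\vect{p}(L)=E$'' without discussing the alphabet-coordinate correspondence or the need for $x$ to be absent from $L_2$; your suggested strengthening of the induction hypothesis is a sensible way to make this precise.
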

\begin{proof}
Let us consider $E=\bigcup_{1 \leq i \leq n}\mathrm{Im}(H_i)$ a factored-semilinear sets and proceed by induction on it:
  \begin{itemize}
    \item if for every $1 \leq i \leq n$, $H_i$ is a linear function, then $E$ is a semilinear set; hence, by hypothesis, there is a language $L \in \mathbf{L}$ such that $\vect{p}(L)=E$.
    \item otherwise, we have $E=\bigcup_{1 \leq k_1 \leq m_1} \bigcup_{1 \leq k_2 \leq m_2} \mathrm{Im}(\mathrm{Wtt}_x(F_{k_1}) + \mathrm{On}_x(F_{k_1})F_{k_2})$, where $E_1 = \bigcup_{1 \leq k_1 \leq m_1}\mathrm{Im}(F_{k_1})$ and $E_2 = \bigcup_{1 \leq k_2 \leq m_2}\mathrm{Im}(F_{k_2})$ are factored-semilinear sets. By induction hypothesis, there exist $L_1$ and $L_2$ in $\mathbf{IO(L)}$ such that $\vect{p}(L_1)=E_1$ and $\vect{p}(L_2)=E_2$. It is then easy to see that, given $L$ such that $L_1[x:=L_2]_{IO} \rightarrow L$, we have $\vect{p}(L)=E$.
  \end{itemize}
\end{proof}

From Propositions\ref{th:IO-factorized} and \ref{prop:factorized->io}, we can deduce the next corollary, which establishes the strong relation between $\mathbf{IO(RL)}$ (or $\mathbf{IO(CFL)}$, $\mathbf{IO(MCFL)}$) and factored-semilinear languages.

\begin{corollary}
A vector set $E$ is a factored-semilinear set iff there exists $L \in \mathbf{IO(L)}$ such that $\vect{p}(L)=E$, where $\mathbf{L}$ is full and complete for semilinear sets.
\end{corollary}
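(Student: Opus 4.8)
The plan is to read the desired equivalence as the conjunction of Propositions~\ref{th:IO-factorized} and~\ref{prop:factorized->io}, with one proposition supplying each implication, and to keep track of which half of the hypothesis ``full and complete for semilinear sets'' powers each direction.

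For the right-to-left implication I would argue as follows. Assume $L \in \mathbf{IO(L)}$ with $\vect{p}(L) = E$. Fullness of $\mathbf{L}$ for semilinear sets says exactly that every language of $\mathbf{L}$ has a semilinear Parikh image, which is precisely what the base case ($n=0$) of the induction in Proposition~\ref{th:IO-factorized} uses; that proposition then applies verbatim and shows that $E = \vect{p}(L)$ is factored-semilinear.

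For the left-to-right implication, assume $E$ is factored-semilinear. Completeness of $\mathbf{L}$ for semilinear sets is literally the hypothesis of Proposition~\ref{prop:factorized->io}, so that proposition produces a language $L \in \mathbf{IO(L)}$ with $\vect{p}(L) = E$. Putting the two implications together yields the equivalence.

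There is no real obstacle here; the only point worth making explicit is that the two conditions bundled into ``full and complete for semilinear sets'' are consumed separately --- fullness for one inclusion, completeness for the other --- and that, as recalled just after the definition of full and complete classes, $\mathbf{RL}$, $\mathbf{CFL}$, $\mathbf{yTAL}$ and $\mathbf{MCFL}$ all meet both conditions, so that in particular the statement characterises the Parikh images of the languages in $\mathbf{IO(RL)}$, $\mathbf{IO(CFL)}$ and $\mathbf{IO(MCFL)}$.
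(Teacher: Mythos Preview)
Your proposal is correct and matches the paper's approach exactly: the paper states the corollary as an immediate consequence of Propositions~\ref{th:IO-factorized} and~\ref{prop:factorized->io} without further argument, and you have simply spelled out which proposition gives which implication and which half of the ``full and complete'' hypothesis each one consumes.
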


This corollary leads to an alternative proof of the constant-growth property for languages in $\mathbf{IO(L)}$; it suffices to show that factored-semilinear set are existentially-linear; we prove the stronger statement that these languages are universally-linear.

\begin{theorem}
  Every factored-semilinear set is universally-linear.
\end{theorem}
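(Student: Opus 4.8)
The statement is really a claim about vector functions, so the plan is to prove it first at the function level and then lift it to sets. Concretely, I would first establish the lemma: \emph{every factored vector function $F:\mathbb{N}^n\to\mathbb{N}^m$ is universally-linear}, by induction on the definition of ``factored''. The base case, where $F$ is linear, is immediate: writing $F(x_1,\dots,x_n)=\vect{v_0}+\sum_{1\le j\le n}x_j\vect{v_j}$ as a vector function, the coefficient function attached to $\vect{v_0}$ is the constant $1$ and the one attached to $\vect{v_j}$ is $x_j$, and each of these is affine in every variable $x_i$ (with slope $0$ whenever $i\neq j$), so $F$ is $i$-linear for all $i$.

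For the inductive step, take $F=\mathrm{Wtt}_k(F_1)+\mathrm{On}_k(F_1)F_2$ with $F_1:\mathbb{N}^{n_1}\to\mathbb{N}^m$ and $F_2:\mathbb{N}^{n_2}\to\mathbb{N}^m$ factored, hence universally-linear by the induction hypothesis, and $n=n_1+n_2$. Writing $F_1(\vect{x})=\sum_j f_j(\vect{x})\vect{v_j}$ and $F_2(\vect{y})=\sum_l g_l(\vect{y})\vect{w_l}$, one has $\mathrm{Wtt}_k(F_1)(\vect{x})=\sum_j f_j(\vect{x})\,(\vect{v_j}|_k)$ and $\mathrm{On}_k(F_1)(\vect{x})=\sum_j f_j(\vect{x})\,\vect{v_j}[k]$, so
$$F(\vect{x},\vect{y})=\sum_j f_j(\vect{x})\,(\vect{v_j}|_k)+\sum_l\bigl(\mathrm{On}_k(F_1)(\vect{x})\,g_l(\vect{y})\bigr)\,\vect{w_l},$$
which already displays $F$ as a genuine vector function with coefficient functions the $f_j(\vect{x})$ and the products $\mathrm{On}_k(F_1)(\vect{x})\,g_l(\vect{y})$ (all valued in $\mathbb{N}$). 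It then remains to check $i$-linearity for each of the $n_1+n_2$ variables, splitting on which block $x_i$ lies in. If $x_i$ is a variable of $F_1$: each $f_j$ is affine in $x_i$ since $F_1$ is $i$-linear; $\mathrm{On}_k(F_1)(\vect{x})$, being an $\mathbb{N}$-combination of the $f_j$, is then again affine in $x_i$; and $g_l(\vect{y})$ does not involve $x_i$, so the product is affine in $x_i$. If $x_i$ is a variable of $F_2$: the $f_j(\vect{x})$ and $\mathrm{On}_k(F_1)(\vect{x})$ are constant in $x_i$, while $g_l$ is affine in $x_i$ since $F_2$ is $i$-linear, so every coefficient function of $F$ is again affine in $x_i$. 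Hence $F$ is $i$-linear for every $i$. The only real obstacle is bookkeeping: keeping the partition of the variables into the $F_1$- and $F_2$-blocks straight and using that a finite sum of functions each affine in a fixed variable is again affine in it — this is the one place where the ``$\mathrm{On}_k(F_1)$ stays affine'' observation is used, and it is the crux.

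Finally, to conclude for sets: a factored-semilinear set $E$, unfolding its definition, is a finite union $\bigcup_i\mathrm{Im}(H_i)$ in which each $H_i$ is a factored vector function — in the base case the $H_i$ are the linear functions witnessing semilinearity, and in the inductive case they are the functions $\mathrm{Wtt}_k(F_{j_1})+\mathrm{On}_k(F_{j_1})F_{j_2}$, which are factored by clause~2 of the definition of a factored vector function. By the lemma each $H_i$ is universally-linear, so $E$ is universally-semilinear, as claimed. One point deserving care is that the definition of factored-semilinear set is phrased at the level of sets (``$\bigcup_{j_1}F_{j_1}$ is a factored-semilinear set''), so this last step tacitly reads the $F_{j_1},F_{j_2}$ there as factored vector functions; strengthening the induction to carry along such a decomposition of the set makes it fully rigorous. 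Together with the corollary relating factored-semilinear sets to Parikh images of $\mathbf{IO(L)}$, and the earlier lemma that existentially-linear (a fortiori universally-linear) languages are constant-growth, this yields the announced alternative proof of the constant-growth property for $\mathbf{IO(L)}$.
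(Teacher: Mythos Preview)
Your argument is correct and matches the paper's approach: both induct on the factored structure and verify $i$-linearity by splitting on whether $x_i$ lies in the $F_1$-block or the $F_2$-block, using that affine-in-$x_i$ functions are closed under $\mathbb{N}$-linear combinations and under multiplication by functions not involving $x_i$. The only cosmetic difference is that the paper inducts directly on the set definition and freezes all variables but one (rather than exhibiting explicit coefficient functions as you do); your caveat about needing to strengthen the induction so that the $F_{j_1},F_{j_2}$ are themselves factored vector functions is well-taken and is a point the paper's own proof glosses over as well.
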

\begin{proof}
  Let us consider an arbitrary factored-semilinear set $E$, and proceed by induction on it:
  \begin{itemize}
    \item if $E$ is a semilinear set $\bigcup_{1 \leq k \leq n} \mathrm{Im}(F_k)$ where $n \in \mathbb{N}$, then, because $F_k$ is a linear function for every $1 \leq k \leq n$, we directly obtain that $F_k$ is universally-semilinear.
    \item otherwise, we have $E=\bigcup_{1 \leq k \leq n}\bigcup_{1 \leq l \leq m} \mathrm{Im}(\mathrm{Wtt}_x(F_k)) + \mathrm{On}_x(F_k)G_l)$, where the sets $\bigcup_{1 \leq k \leq n}\mathrm{Im}(F_k)$ and $\bigcup_{1 \leq l \leq m}\mathrm{Im}(G_l)$ are factored-semilinear sets.

Let us consider a function $H_{ij}=\mathrm{Wtt}_x(F_i) + \mathrm{On}_x(F_i)G_j$, for some $1 \leq i \leq n$ and some $1 \leq j \leq m$,  and functions $F_i:\mathbb{N}^{p_i} \to \mathbb{N}^p$, $G_j:\mathbb{N}^{p_j} \to \mathbb{N}^p$ and $H_{ij}:\mathbb{N}^{p_i+p_j} \to \mathbb{N}^p$. We consider $H_{ijq}(x)=H_{ij}(a_1, \dots, a_{q-1}, x, a_{q+1}, \dots, a_{p_1+p_2})$, where $a_r \in \mathbb{N}$ for every $r$ such that $1 \leq r \leq q-1$ or $q+1 \leq r \leq p_1+p_2$. We show that $H_{ij}$ is $q$-linear (\textit{i.e.}~$H_{ijq}$ is linear)
        \begin{itemize}
          \item if $1 \leq q \leq p_1$, then:
            \begin{align*}
              H_{ijq}(x) = & \mathrm{Wtt}_x(F_i)(a_1, \dots, a_{q-1}, x, a_{q+1}, \dots, a_{p_1}) + & \\
              & \mathrm{On}_x(F_i)(a_1, \dots, a_{q-1}, x, a_{q+1}, \dots, a_{p_1})G_j(a_{p_1+1}, \dots, a_{p_1+p_2}) &\\
               = & (Ax+B)+(A'x+B')C \text{\hspace{55pt} $F_i$ is universally-linear} &\\
               = & (A+A'C)x+(B+B'C) &
            \end{align*}
            and $H_{ij}$ is therefore $q$-linear.
            \item if $p_1+1 \leq p \leq p_1+p_2$, then:
            \begin{align*}
              H_{ijq}(x) = & \mathrm{Wtt}_x(F_i)(a_1, \dots, a_{p_1}) + &\\
              & \mathrm{On}_x(F_i)(a_1, \dots, a_{p_1})G_j(a_{p_1+1}, \dots, a_{q-1}, x, a_{q+1}, dots, a_{p_1+p_2}) &\\
              = & A+A'(Bx+C) \text{\hspace{55pt}as $G_j$ is universally-linear} &\\
              = & A'Bx+(A+A'C) &
            \end{align*}
            and again $H_{ij}$ is $q$-linear.
          \end{itemize}
          Therefore, $H_{ijq}$ is linear for every $p_1 \leq k \leq p_2$, hence $H_{ij}$ is universally-linear. We conclude that $E$ is a universally-linear set.
      \end{itemize}
\end{proof}

We therefore proved that, given \textbf{L} an abstract family of languages, full and complete for semilinear sets, \textbf{IO(L)} is full and complete for factored-semilinear sets. It is then easy to see that \textbf{IO(L)} is not complete for universally-linear sets. Indeed, consider the set $\mathrm{Im}(F)$ where $F(x_1, x_2, x_3)=x_1x_2\langle 1, 0, 0 \rangle + x_2x_3\langle 0, 1, 0 \rangle + x_1x_3\langle 0, 0, 1 \rangle$. According to the definition, $F$ is universally-linear but not factored-semilinear. It is therefore an open question to define a formalism which is full and complete for universally-linear sets, or for existentially-linear sets.
We hope these two newly introduced classes of sets can be relevant in the study of other classes of languages.

In the next section, we show that \textbf{IO(MCFL)} is not an abstract family of languages, by proving it is not closed under inverse homomorphism. The proof is done similarly to the proof that IO macro-grammars are not closed under inverse homomorphism in \cite{fischerphd}, but differs in not being strongly connected to the formalism under study; instead, we will pay special attention on the effect of the IO-substitution on the properties of the Parikh image of languages in \textbf{IO(MCFL)}. As the proof is not strongly related to any grammatical formalism which generates \textbf{IO(MCFL)}, it will be directly extended to the non-closure under inverse homomorphism of \textbf{IO(RL)}, \textbf{IO(CFL)} or \textbf{IO(yTAL)}.


\section{Non-closure of IO-MCFLs under inverse homomorphism}\label{section:inverse-homomorphism}
In \cite{Bourreau-IOsubstitution}, the closure of $\mathbf{IO(MCFL)}$ under homomorphism, concatenation, union and intersection with regular sets was proved. We here prove that the closure under inverse homomorphism is not satisfied, leading, as a corollary, to the proof that $\mathbf{IO(MCFL)}$ is not an abstract family of languages. In order to simplify the proof, we will first give some structural properties of $\mathbf{IO(MCFL)}$.

\subsection{Standard derivations for \textbf{IO(L)}}

In this section, we introduce a first specific form of derivations of a language in $\mathbf{IO(L)}$, and prove that every language in $\mathbf{IO(L)}$ can be derived thanks to such a derivation. The idea is to remove for a given derivation, every IO-substitution which is irrelevant (\textit{i.e.}~such that $L_1[x:=L_2]_{IO} \rightarrow L_1$), or deleting (\textit{i.e.}~of the shape $L_1[x:=\epsilon]_{IO}$)

We first introduce a convention on the naming of the symbols on which the IO-substitutions are performed. Given a derivation $d$ for a language $L \in \mathbf{IO(L)}$, one can remark that letters used in the IO-substitution can be renamed under certain constraints. Indeed, given $d_1[x:=d_2]_{IO} \in \mathcal{D}_L$, and $L'_1, \dots, L'_n \in \mathbf{L}$ the languages used in the derivation $d_1$, one can rename $x$ into any letter which has no occurrence in $\bigcup_{1 \leq i \leq n}L'_i$, and ensure that the same language $L$ is derived\footnote{For readers familiar with the $\lambda$-calculus, the precise conditions under which such letters can be renamed are similar to the $\alpha$-equivalence on $\lambda$-terms: variables can be renamed under the constraint that no other variable is ``captured'' by this renaming. We do not detail such constraints in the present work. The analogy with $\lambda$-calculus i made explicit in \cite{Bourreau-IOsubstitution}}. In the rest of the article, we will assume that, without lost of generality, given a language $L \in \mathbf{IO(L)}$ and $d \in \mathcal{D}_L$, each letter on which an $IO$-substitution is performed has a unique occurrence in the IO-substations of $d$, and no occurrence in $L$\footnote{Such a strict convention is to be compared with Barendregt's convention on variables in the $\lambda$-calculus}. This will allow us, in particular, to associate a unique IO-substitution to such a letter.

\begin{definition}[Irrelevant and deleting IO-substitution]
  Given two languages $L_1 \subseteq \Sigma_1^\ast$ and $L_2 \subseteq \Sigma_2^\ast$, we call the IO-substitution $L_1[x:=L_2]_{IO}$:
  \begin{itemize}
    \item an \emph{irrelevant IO-substitution} if for every word $w \in L_1$,  $|w|_x =0$.
    \item a \emph{deleting IO-substitution} if $L_2=\epsilon$.
    \end{itemize}
  \end{definition}

\begin{lemma}\label{lem:no-irrelevant-iosub}
Let us consider a class \textbf{L} of languages, closed under homomorphism, and a language $L \in \mathbf{IO(L)}$. There exists a derivation tree of $L$ with no irrelevant and no deleting substitution.
\end{lemma}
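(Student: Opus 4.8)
The plan is to argue by induction on the number of IO-substitutions occurring in a derivation of $L$, at each step removing the \emph{topmost} irrelevant or deleting substitution while producing a valid derivation of the same language that has strictly fewer substitutions than the derivations to which the induction hypothesis is applied. If the derivation is a single base node there is nothing to prove. Otherwise it has the form $d_1[x:=d_2]_{IO}$ with $d_1 \rightarrow^\ast L_1$, $d_2 \rightarrow^\ast L_2$ and $L_1[x:=L_2]_{IO} \rightarrow L$, and I would split into three cases. If the top substitution is irrelevant, then $L_1[x:=L_2]_{IO} \rightarrow L_1$, so $L = L_1$, and since $d_1$ contains strictly fewer substitutions the induction hypothesis applied to $d_1$ already yields the desired derivation. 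If the top substitution is neither irrelevant nor deleting, I apply the induction hypothesis to the strictly smaller $d_1$ and $d_2$, obtaining clean derivations $d_1'$ of $L_1$ and $d_2'$ of $L_2$, and output $d_1'[x:=d_2']_{IO}$: it derives $L$, its top substitution is non-deleting because $L_2 \neq \epsilon$ and non-irrelevant because $L_1$ is unchanged (so $x$ still occurs in some word of $L_1$), and its proper subderivations are clean by construction, after, if necessary, re-applying the renaming convention to the recombined derivation so that all substitution letters stay distinct.

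The only substantial case is the deleting one, $L_2 = \epsilon$. Here $L = io_{x,\epsilon}(L_1) = h_x(L_1)$, where $h_x$ is the (erasing) homomorphism that deletes $x$ and fixes every other letter. I would show, by a sub-induction on $d_1$, that $h_x(L(d_1)) = L(d_1^{\circ})$, where $d_1^{\circ}$ is obtained from $d_1$ by replacing each base leaf $L' \in \mathbf{L}$ by $h_x(L') \in \mathbf{L}$ — the only place where the hypothesis that $\mathbf{L}$ is closed under homomorphism is used — while leaving the tree shape untouched. The inductive step is the commutation of $h_x$ with IO-substitution: for an inner substitution $L_a[z:=L_b]_{IO} \rightarrow L_c$ the naming convention forces $z \neq x$, and a direct computation on words shows that erasing $x$ from $io_{z,w}(u)$ equals substituting $h_x(w)$ for $z$ in $h_x(u)$, whence $h_x(L_c) = h_x(L_a)[z:=h_x(L_b)]_{IO}$. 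The derivation $d_1^{\circ}$ then derives $L$, has exactly as many substitutions as $d_1$ and hence strictly fewer than $d_1[x:=d_2]_{IO}$, and may contain new deleting substitutions (for instance an inner substitution of some $L_b \subseteq x^\ast$ now substitutes $\epsilon$); this is harmless, since the induction hypothesis is applied to $d_1^{\circ}$ afterwards.

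I expect the main obstacle to be the bookkeeping around the variable conventions in this commutation step: one must be sure that pushing $h_x$ inward never captures or identifies two distinct substitution letters, which is exactly the formal-language analogue of the substitution lemma for the $\lambda$-calculus that the paper appeals to in its remarks. Everything else is routine — the three-way case split is forced by the shape of a derivation, and each branch either invokes the induction hypothesis on strictly smaller derivations (irrelevant, and non-irrelevant non-deleting cases) or first replaces the left subderivation by a same-shape, no-larger derivation of the same language and then invokes the induction hypothesis (deleting case).
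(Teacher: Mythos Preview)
Your proposal is correct and follows essentially the same approach as the paper: both handle the irrelevant case by simply dropping the substitution, and both handle the deleting case via the commutation $h_x(L_a[z:=L_b]_{IO}) = h_x(L_a)[z:=h_x(L_b)]_{IO}$ for $z\neq x$, using closure of $\mathbf{L}$ under homomorphism at the leaves. The only cosmetic difference is that the paper pushes $[x:=\epsilon]_{IO}$ down one level and immediately reapplies the outer induction to the two augmented subderivations, whereas you first push $h_x$ all the way to the leaves to form $d_1^{\circ}$ and then invoke the induction hypothesis once; your packaging is slightly cleaner and makes the possible creation of new deleting substitutions (and why it is harmless) explicit.
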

\begin{proof}
  First, if $|w|_x=0$ for every word $w \in L_1$, then $\iosub{L_1}{x}{L_2} \rightarrow L_1$, and such a substitution can be trivially removed from any derivation $d \in \mathcal{D}_L$.
  
  We show that, given a language $L \in \mathbf{IO(L)}$ and $d \in \mathcal{D}_L$, there exists a derivation tree $d'$ in $\mathcal{D}_L$ with no deleting subderivation of the form $L'[z:=\epsilon]_{IO}$. We proceed by induction on $d$: first, if $d=L$, the result is trivial. Otherwise,  $d = d_1[x:=d_2]_{IO}$, where $d_1 \in \mathcal{D}_{L_1}$ and $d_2 \in \mathcal{D}_{L_2}$ for some $L_1, L_2 \in \mathbf{IO(L)}$. If $L_2 \neq \epsilon$ then, by induction hypothesis, there exist $d'_1 \in \mathcal{D}_{L_1}$ and $d'_2 \in \mathcal{D}_{L_2}$ which contain no deleting IO-substitutions; the derivation $d'_1[x:=d'_2]_{IO}$ is still a derivation tree of $L$, and has no deleting IO-substitution. Otherwise, we exhibit such a derivation $d' \in \mathcal{D}_L$ by induction on $d_1$:
    \begin{itemize}
      \item if $d_1=L_1 \in \mathbf{L}$; then, consider an alphabet $\Sigma$ such that $L_1 \subseteq \Sigma^\ast$ and the morphism $io_{x,\epsilon}:\Sigma^\ast \to \Sigma^\ast$. The relation $\iosub{L_1}{x}{\epsilon} \rightarrow L$ is equivalent to $io_{x, \epsilon}(L_1)=L$ by definition; moreover, by the hypothesis that $\mathbf{L}$ is closed by homomorphism, $L$ belongs to \textbf{L}; therefore, there exists $d'=L \in \mathcal{D}_L$.
        \item otherwise $d_1 = d_{11}[y:=d_{12}]_{IO}$ such that $d_{11} \in \mathcal{D}_{L_{11}}, d_{12} \in \mathcal{D}_{L_{12}}, L_{11}, L_{12} \in \mathbf{IO(L)}$ and $\iosub{L_{11}}{y}{L_{12}} \rightarrow L_1$; we have
          \begin{align*}
            \iosub{\iosub{L_{11}}{y}{L_{12}}}{x}{\epsilon} \rightarrow & L & \text{and}\\
            \iosub{\iosub{L_{11}}{x}{\epsilon}}{y}{\iosub{L_{12}}{x}{\epsilon}} \rightarrow & L & \text{by supposing $x \neq y$}
          \end{align*}
          and by induction hypothesis, we know the existence of two derivations $d'_1 \in \mathcal{D}_{L'_{11}}$ and $d'_2 \in \mathcal{D}_{L'_{12}}$ where $\iosub{L_{11}}{x}{\epsilon} \rightarrow L'_{11}$ and $\iosub{L_{12}}{x}{\epsilon} \rightarrow L'_{12}$, such that $d'_{11}$ and $d'_{12}$ contain no deleting IO-substitution. We conclude that $d'_{11}[x:=d'_{12}]_{IO} \in \mathcal{D}_L$ has no deleting IO-substitution.
        \end{itemize}
    \end{proof}

\begin{definition}[Standard derivation]
Given a language $L$ in \textbf{IO(L)}, a derivation $d \in \mathcal{D}_L$ is in \emph{standard form} if:
\begin{itemize}
  \item $d=L$ and $L$ belongs to $\mathbf{L}$, or
  \item $d=\iosub{d_1}{x}{L_2}$ where $L_2$ belongs to $\mathbf{L}$, and $d_1 \in \mathcal{D}_{L_1}$ is in standard form (where $L_1 \in \mathbf{IO(L)}$).
\end{itemize}
\end{definition}

According to this definition, a standard derivation of a language $L \in \mathbf{IO(L)}$ can be written $L_0[x_1:=L_1]_{IO}[x_1:=L_1]_{IO} \dots [x_n:=L_n]_{IO}$, where $n \in \mathbb{N}$ and $L_i \in \mathbf{L}$ for every $1 \leq i \leq n$.
From now on, a standard from will be written according to this notation.

\begin{theorem}\label{th:std-form}
 For every language $L \in \mathbf{IO(L)}$, there exists a standard derivation in $\mathcal{D}_L$.
\end{theorem}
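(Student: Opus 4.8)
The plan is to transform an arbitrary derivation of $L$ into standard form by repeatedly pushing outward every IO-substitution whose right-hand argument is not a language of $\mathbf{L}$, using a commutation law for the IO-substitution. Concretely, I would first establish that, under the freshness convention on substitution letters (so that $x\neq y$, the letter $y$ occurs in no word of $L_1$, and the letter $x$ occurs in no word of $L_3$),
$$L_1[x := L_2[y := L_3]_{IO}]_{IO} \;=\; \bigl(L_1[x := L_2]_{IO}\bigr)[y := L_3]_{IO}.$$
This is proved by unfolding the definitions of the homomorphisms $io_{x,\cdot}$ and $io_{y,\cdot}$: every word of the left-hand side has the shape $io_{x,\,io_{y,u}(w_2)}(w_1)$ for some $w_1\in L_1$, $w_2\in L_2$, $u\in L_3$, every word of the right-hand side has the shape $io_{y,u}\bigl(io_{x,w_2}(w_1)\bigr)$ for the same data, and these two strings coincide because $w_1$ contains no $y$ and $u$ contains no $x$, so that applying $io_{y,u}$ after $io_{x,w_2}$ merely rewrites, uniformly, every $y$ appearing inside the (identical) copies of $w_2$ that were substituted for the occurrences of $x$. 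The side conditions hold, up to renaming, by the global convention that each letter carrying an IO-substitution is fresh and occurs in no language of $\mathbf{L}$ used in the derivation.

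Next I would prove, by induction on the number $q$ of substitutions in a standard derivation of $N$, the following statement: if $M\in\mathbf{IO(L)}$ has a standard derivation $e_M$ and $N\in\mathbf{IO(L)}$ has a standard derivation $N_0[z_1:=N_1]_{IO}\cdots[z_q:=N_q]_{IO}$ with every $N_i\in\mathbf{L}$, then $M[x:=N]_{IO}$ has a standard derivation. If $q=0$, then $N=N_0\in\mathbf{L}$ and $e_M[x:=N]_{IO}$ is by definition a standard derivation of $M[x:=N]_{IO}$. If $q>0$, write $N=N'[z_q:=N_q]_{IO}$ where $N'$ is generated by the standard derivation $N_0[z_1:=N_1]_{IO}\cdots[z_{q-1}:=N_{q-1}]_{IO}$ of length $q-1$; the commutation law then gives $M[x:=N]_{IO}=\bigl(M[x:=N']_{IO}\bigr)[z_q:=N_q]_{IO}$, the induction hypothesis supplies a standard derivation $e'$ of $M[x:=N']_{IO}$, and $e'[z_q:=N_q]_{IO}$ is a standard derivation of $M[x:=N]_{IO}$ since $N_q\in\mathbf{L}$.

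The theorem then follows by induction on the least $n$ with $L\in IO_n(\mathbf{L})$. If $n=0$, then $L\in\mathbf{L}$ and the one-node derivation $d=L$ is in standard form. If $n>0$, then $L=L_1[x:=L_2]_{IO}$ with $L_1,L_2\in IO_{n-1}(\mathbf{L})$; the induction hypothesis gives standard derivations of $L_1$ and of $L_2$, and the statement of the previous paragraph, applied with $M=L_1$ and $N=L_2$, produces a standard derivation of $L$. Note that, unlike Lemma~\ref{lem:no-irrelevant-iosub}, this argument does not use closure of $\mathbf{L}$ under homomorphism.

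I expect the main obstacle to be the verification of the commutation law. One must reason carefully about the inside-out semantics of IO-substitution — in particular that all occurrences of $x$ in a given word of $L_1$ are replaced by one and the same word of $L_2$, and that the later substitution of $y$ then acts identically on all the copies so created — and one must keep track of the freshness and renaming conditions that allow the law to be applied inside subderivations without variable capture. Once these points are settled, the two inductions are routine.
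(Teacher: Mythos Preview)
Your proposal is correct and follows essentially the same approach as the paper: an outer induction on the derivation structure, combined with an inner induction on the length of the standard derivation of the right argument, using the commutation identity $L_1[x:=L_2[y:=L_3]_{IO}]_{IO}=(L_1[x:=L_2]_{IO})[y:=L_3]_{IO}$ to peel off the outermost substitution of the right argument. The only cosmetic difference is that the paper enforces freshness by explicitly introducing new symbols $z_i$ at each step, whereas you appeal directly to the global naming convention; your formulation of the commutation law as a separate lemma is, if anything, cleaner.
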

\begin{proof}
Let us consider a derivation $d \in \mathcal{D}_L$. By induction on $d$, we exhibit a standard derivation $d' \in \mathcal{D}_L$. If $d=L$ then $d'=d=L$ is in standard form; otherwise $d=d_1[x:=d_2]_{IO}$, by induction hypothesis, there exists $d'_1=L_{10}[x_1:=L_{11}]_{IO} \dots [x_{n}:=L_{1n}]_{IO} \in \mathcal{D}_{L_1}$ and $d'_2=L_{20}[y_1:=L_{21}]_{IO} \dots [y_{m}:=L_{2m}]_{IO} \in \mathcal{D}_{L_2}$ in standard form. By induction on $m$, we prove the existence of symbols $\{z_1, \dots, z_m\}$ such that:
\begin{center}
  $d'_1[x:=d'_2]_{IO} \in \mathcal{D}_L$ iff $d'_1[x:=L_{20}]_{IO} [y_1:=z_1]_{IO}[z_{1}:=L_{21}]_{IO} \dots [y_m:=z_m]_{IO}[z_{m}:=L_{2m}]_{IO} \in \mathcal{D}_L$
\end{center}
For $0 \leq k \leq m$, let us write $d_{2k}=L_{20}[x_1:=L_{21}]_{IO} \dots [x_{k}:=L_{2k}]_{IO}$.
If $m=0$ the statement is trivial; suppose this is true for $k \in \mathbb{N}$ and $m=k+1$.
Then we have the derivation $d'_1[x:=[d_{2k}[y_{k+1}:=L_{2(k+1)}]_{IO}]_{IO}$ in $\mathcal{D}_L$.
Let us consider a symbol $z_{k+1}$ such that, for every $w \in L_1 \cup \bigcup_{0 \leq i \leq k} L_{2i}$, $|w|_{z_{k+1}}=0$.
Then, given the language $L''$ such that the derivation $\iosub{d_{2k}}{y_{2(k+1)}}{L_{2(k+1)}}$ is in $\mathcal{D}_{L''}$, we know that the derivation  $\iosub{\iosub{d_{2k}}{y_{k+1}}{z_{k+1}}}{z_{2(k+1)}}{L_{2(k+1)}}$ also belongs to $\mathcal{D}_{L''}$;
it follows that $d'_1[x:=d_{2k}]_{IO}[z_{k+1}:=L_{2(k+1)}]_{IO}$ belongs to $\mathcal{D}_L$ because $|w|_x=0$, for all $w \in L_1$. By applying the induction hypothesis on the derivation $d'_1[x:=d_{2k}]_{IO}$, we obtain a derivation in standard form, and the existence of the symbols $\{z_1, \dots, z_m\}$ as stated.
\end{proof}

\begin{theorem}
  Given a language $L  \in \mathbf{IO(L)}$, where $\mathbf{L}$ is a family of languages closed by homomorphism, there is a derivation of $L$ in standard form, and with no irrelevant or deleting IO-substitutions.
\end{theorem}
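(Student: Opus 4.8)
The statement amounts to running the two previous constructions in succession, the only real work being to check that they do not interfere. The plan is the following. First apply Theorem~\ref{th:std-form} to obtain a standard derivation
$$d = L_0[x_1:=L_1]_{IO}[x_2:=L_2]_{IO}\dots[x_n:=L_n]_{IO},$$
with $L_i \in \mathbf{L}$ for every $0 \leq i \leq n$, and write $M_i = L_0[x_1:=L_1]_{IO}\dots[x_i:=L_i]_{IO}$ for the language produced by the first $i$ substitutions (so $M_0 = L_0$ and $M_n = L$). We then show that a single irrelevant or deleting substitution can be removed from a standard derivation while keeping it in standard form and strictly decreasing the number of substitution steps; since that number is a strictly decreasing termination measure, iterating these removals must terminate on a standard derivation of $L$ with neither irrelevant nor deleting substitutions.

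Removing an irrelevant step: suppose $x_i$ has no occurrence in $M_{i-1}$. Then $M_{i-1}[x_i:=L_i]_{IO} \rightarrow M_{i-1}$, and, by the freshness convention, $x_i$ is the target of no other substitution, nor (since its occurrences would then survive into $L$) does it occur in any $L_j$ with $j \neq i$; hence deleting this step leaves the later substitutions well-defined and
$$L_0[x_1:=L_1]_{IO}\dots[x_{i-1}:=L_{i-1}]_{IO}[x_{i+1}:=L_{i+1}]_{IO}\dots[x_n:=L_n]_{IO}$$
is again a standard derivation of $L$, now with one substitution fewer.

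Removing a deleting step: suppose $L_i = \{\epsilon\}$. Since $x_i$ is, by the freshness convention, distinct from every $x_j$, a routine computation gives the commutation identity
$$io_{x_i,\epsilon}\bigl(L'[x_j:=L'']_{IO}\bigr) = io_{x_i,\epsilon}(L')\,[x_j:=io_{x_i,\epsilon}(L'')]_{IO},$$
and applying it repeatedly yields $io_{x_i,\epsilon}(M_{i-1}) = io_{x_i,\epsilon}(L_0)[x_1:=io_{x_i,\epsilon}(L_1)]_{IO}\dots[x_{i-1}:=io_{x_i,\epsilon}(L_{i-1})]_{IO}$. As $\mathbf{L}$ is closed under homomorphism, each $io_{x_i,\epsilon}(L_j)$ still lies in $\mathbf{L}$, so this is a standard derivation of $M_i = M_{i-1}[x_i:=\epsilon]_{IO}$; prefixing it to $[x_{i+1}:=L_{i+1}]_{IO}\dots[x_n:=L_n]_{IO}$ produces a standard derivation of $L$ with one substitution fewer.

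Combining these, one starts from the standard derivation of Theorem~\ref{th:std-form} and repeatedly applies whichever of the two removals is available until none remains; each application keeps the derivation standard and lowers the step count, so the procedure halts at the desired derivation. The point needing care is that the deleting-removal may collapse some $io_{x_i,\epsilon}(L_j)$ to $\{\epsilon\}$, thereby creating a fresh deleting substitution further to the left; this is harmless for termination because the step count still decreases, and — this is the key observation — since $io_{x_i,\epsilon}$ fixes every $x_j$, it never destroys the occurrences that witness the relevance of the remaining steps, so no new irrelevant substitution is ever introduced. I expect this last bit of bookkeeping, together with the commutation identity, to be the only genuinely delicate part; the rest is immediate from the freshness convention and from the closure of $\mathbf{L}$ under homomorphism.
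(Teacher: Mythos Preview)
Your proposal is correct and follows essentially the same route as the paper: standardise first via Theorem~\ref{th:std-form}, then observe that removing irrelevant or deleting IO-substitutions preserves standard form (the paper states this in one sentence, also noting the equivalent reverse order). Your write-up is considerably more explicit---the commutation identity, the termination argument, and the check that removing a deleting step never creates a new irrelevant one are all spelled out---but this is elaboration of the paper's sketch rather than a different argument; the one superfluous claim (that $x_i$ occurs in no $L_j$ with $j\neq i$) is not actually needed for the removal of an irrelevant step to work.
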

\begin{proof}
  It suffices to see that the elimination process of irrelevant or deleting substitutions does not modify the structure of the derivation, or equivalently, that the process of creating a derivation in standard form in the proof of theorem~\ref{th:std-form} does not imply the creation of irrelevant or deleting substitutions.
\end{proof}

In the rest of the document, we consider derivations in standard form that contain no irrelevant or deleting IO-subsitutions.


\subsection[The separation lemma]{Fully-effective derivations}

We next characterise a new kind of derivations for languages in $\mathbf{IO(L)}$. This new form will allow us to exhibit only substitution that are effective, \textit{i.e.}~$L_1[x:=L_2]_{IO}$ such that $|w|_x > 0$ for every $w \in L$.
In order to do so, we start by giving a fundamental lemma, which is a direct consequence of the Myhill-Nerode theorem:

\begin{definition}
  Given an alphabet $\Sigma$, a congruence $\cong$ on $\Sigma^\ast$ is an equivalence relation such that, for every $w_1, w_2, u \in \Sigma^\ast$, $w_1 \cong w_2$ implies $w_1u \cong w_2u$.

  Such a congruence is said:
  \begin{itemize}
    \item of finite index if $\Sigma/{\cong}$ is finite.
    \item to saturate a language $L \subseteq \Sigma^\ast$ if for every $w_1, w_2 \in \Sigma^\ast$, $w_1 \cong w_2$ implies $w_1 \in L$ iff $w_2 \in L$ (\textit{i.e.}~$L$ is made of a union of congruence classes in $\Sigma^\ast/{\cong}$)
    \end{itemize}
  \end{definition}

\begin{theorem}[Myhill-Nerode]
  A language $L \subseteq \Sigma^\ast$ is regular iff there exists a congruence $\cong$ of finite index over $\Sigma^\ast$ which saturates $L$.
\end{theorem}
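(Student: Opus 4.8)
The plan is to prove both directions by the standard translation between deterministic finite automata and congruences, using the correspondence ``states'' $\leftrightarrow$ ``congruence classes''. Note first that the paper's notion of congruence is a \emph{right} congruence (closure only under right concatenation), which is exactly what the automaton picture provides and all that is needed.

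For the left-to-right direction, I would fix a deterministic finite automaton $A=(Q,\Sigma,\delta,q_0,F)$ recognising $L$, extend $\delta$ to $\delta^\ast:Q\times\Sigma^\ast\to Q$ in the usual way, and define $w_1\cong w_2$ iff $\delta^\ast(q_0,w_1)=\delta^\ast(q_0,w_2)$. This is plainly an equivalence relation; it is a congruence since $\delta^\ast(q_0,w_1u)=\delta^\ast(\delta^\ast(q_0,w_1),u)=\delta^\ast(\delta^\ast(q_0,w_2),u)=\delta^\ast(q_0,w_2u)$ whenever $w_1\cong w_2$; it has finite index because $[w]\mapsto\delta^\ast(q_0,w)$ is a well-defined injection of $\Sigma^\ast/{\cong}$ into the finite set $Q$; and it saturates $L$ because $w_1\cong w_2$ forces $\delta^\ast(q_0,w_1)\in F$ exactly when $\delta^\ast(q_0,w_2)\in F$, that is, $w_1\in L$ exactly when $w_2\in L$.

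For the converse, given a congruence $\cong$ of finite index that saturates $L$, I would construct a finite automaton with state set $Q=\Sigma^\ast/{\cong}$ (finite by hypothesis), initial state $[\epsilon]$, accepting set $F=\{[w]\mid w\in L\}$, and transition function $\delta([w],a)=[wa]$. Here the congruence property is precisely what makes $\delta$ well-defined ($w\cong w'$ implies $wa\cong w'a$), and saturation is precisely what makes $F$ well-defined ($w\cong w'$ implies $w\in L$ iff $w'\in L$). A one-line induction on word length gives $\delta^\ast([\epsilon],w)=[w]$, so the automaton accepts $w$ iff $[w]\in F$ iff $w\in L$, the last step again invoking saturation; hence $L$ is regular. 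The whole argument is routine: the only points that genuinely require attention are the two well-definedness verifications in the converse direction, and I do not anticipate any real obstacle.
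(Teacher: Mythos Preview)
Your proof is the standard textbook argument and is correct, including your observation that the paper's ``congruence'' is really a right congruence, which is precisely what the DFA construction gives. However, the paper does not actually prove this theorem: it is stated as the classical Myhill--Nerode theorem without proof, and is then used only to derive the Separation Lemma (Corollary~\ref{lem:separation}). So there is no ``paper's own proof'' to compare against; your argument supplies exactly the standard justification that the paper omits as well-known.
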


In the following theorem, we are particularly interested in the special case where $L$ is a class of $\Sigma^\ast/{\cong}$, and where $\cong$ is a congruence of finite index on $\Sigma^\ast$.

\begin{corollary}[Separation Lemma\footnote{This lemma is given as the factorisation lemma in \cite{fischerphd}. We use a different name as we already used the terminology of factorisation in section~\ref{part:factorized-parikh}}]\label{lem:separation}
  Consider an alphabet $\Sigma$, a congruence $\cong$ of finite index on $\Sigma^\ast$, a class $C \in \Sigma^\ast/{\cong}$ and a language $L \subseteq \Sigma^\ast$, such that $L$ belongs to a family $\mathbf{L}$ of languages closed by intersection with regular sets. Then $L \cap C$ belongs to $\mathbf{L}$; moreover, $L=\bigcup_{C \in \Sigma^\ast/{\cong}} L \cap C$.
\end{corollary}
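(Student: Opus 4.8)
The plan is to derive the statement directly from the Myhill--Nerode theorem, with essentially no work beyond unwinding definitions. First I would observe that every congruence class $C \in \Sigma^\ast/{\cong}$ is itself a regular language: the congruence $\cong$ has finite index by assumption, and it saturates $C$ for the trivial reason that $C$ is a single $\cong$-class, so that $w_1 \cong w_2$ forces $w_1$ and $w_2$ to lie on the same side of the question ``$\in C$?''. Applying the Myhill--Nerode theorem to the language $C$, with $\cong$ as the witnessing congruence, then yields that $C$ is regular.

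Since $C$ is regular and $\mathbf{L}$ is closed under intersection with regular sets, the language $L \cap C$ belongs to $\mathbf{L}$, which is the first claim. For the second claim I would use that $\cong$, being an equivalence relation, partitions $\Sigma^\ast$ into its classes, so $\Sigma^\ast = \bigcup_{C \in \Sigma^\ast/{\cong}} C$, a finite union since $\cong$ has finite index. Intersecting both sides with $L$ and distributing gives
$$L = L \cap \Sigma^\ast = L \cap \bigcup_{C \in \Sigma^\ast/{\cong}} C = \bigcup_{C \in \Sigma^\ast/{\cong}} (L \cap C),$$
as desired.

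There is no genuine obstacle here: the only point requiring a moment's thought is the verification that an individual congruence class is regular, which is precisely the ``saturation implies regular'' direction of Myhill--Nerode applied in the degenerate case where the saturated language consists of a single class. The finiteness of the index is invoked twice --- once to apply Myhill--Nerode, once to guarantee the decomposition is a finite union --- and is what makes the statement meaningful in the setting of families of languages closed only under intersection with regular sets.
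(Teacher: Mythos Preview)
Your proposal is correct and matches the paper's approach exactly: the paper states the result as an immediate corollary of the Myhill--Nerode theorem, having just noted that the case of interest is when the language in Myhill--Nerode is taken to be a single congruence class. You have simply made explicit the two steps the paper leaves implicit---that each class $C$ is regular (by Myhill--Nerode applied to $C$ itself) and that the classes partition $\Sigma^\ast$---and there is nothing to add.
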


The separation lemma is in particular true when the family $\mathbf{L}$ is an abstract family of languages, such as  $\mathbf{RL}, \mathbf{CFL}, \mathbf{yTAL}$ or $\mathbf{MCFL}$.

\begin{lemma}\label{lem:unions-iol}
  Given $L, L_{11}, L_{12}, L_{21}, L_{22} \in \mathbf{IO(L)}$ and $d_{ij} \in \mathcal{D}_{L_{ij}}$ for every $i, j \in \{1, 2\}$:
  \begin{center}$(L_{11} + L_{12}) [x:=L_{21}+L_{22}]_{IO} \in \mathcal{D}_L$ iff $\bigcup_{i, j \in \{1, 2\}} d_{1i}[x:=d_{2j}]_{IO}\in \mathcal{D}_L$\end{center}
\end{lemma}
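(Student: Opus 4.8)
The plan is to reduce the statement to the elementary fact that the IO-substitution distributes over finite unions in each of its two arguments, and then to read off the equivalence at the level of derivations. Two preliminary remarks fix the reading. First, given $L_1$, $L_2$ and $x$, the equation $L = \bigcup_{w \in L_2} io_{x,w}(L_1)$ determines $L$ uniquely, so a derivation lies in $\mathcal{D}_L$ exactly when $L$ is the one language it derives. Second, I read the assertion $\bigcup_{i,j} d_{1i}[x:=d_{2j}]_{IO} \in \mathcal{D}_L$ as meaning $L = \bigcup_{i,j \in \{1,2\}} M_{ij}$, where $M_{ij}$ is the language determined by $L_{1i}[x:=L_{2j}]_{IO} \rightarrow M_{ij}$ and $d_{1i}[x:=d_{2j}]_{IO} \in \mathcal{D}_{M_{ij}}$.

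The first step is the two distributivity identities at the language level. On the right argument it is immediate from the definition: $\bigcup_{w \in L_{21}+L_{22}} io_{x,w}(L_1)$ splits as $(\bigcup_{w \in L_{21}} io_{x,w}(L_1)) \cup (\bigcup_{w \in L_{22}} io_{x,w}(L_1))$, so if $L_1[x:=L_{21}]_{IO} \rightarrow A$ and $L_1[x:=L_{22}]_{IO} \rightarrow B$ then $L_1[x:=L_{21}+L_{22}]_{IO} \rightarrow A+B$, and conversely. On the left argument one uses only that $io_{x,w}$, being the homomorphic extension of a function, has image distributing over unions of languages, $io_{x,w}(L_{11}+L_{12}) = io_{x,w}(L_{11}) + io_{x,w}(L_{12})$; taking the union over $w \in L_2$ shows that $(L_{11}+L_{12})[x:=L_2]_{IO}$ derives the union of the languages derived by $L_{11}[x:=L_2]_{IO}$ and $L_{12}[x:=L_2]_{IO}$. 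Combining the two, the language derived by $(L_{11}+L_{12})[x:=L_{21}+L_{22}]_{IO}$ is exactly $\bigcup_{i,j \in \{1,2\}} M_{ij}$.

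It then remains to pass to derivations. Since $d_{1i} \in \mathcal{D}_{L_{1i}}$ and $d_{2j} \in \mathcal{D}_{L_{2j}}$, the definition of derivation gives $d_{1i}[x:=d_{2j}]_{IO} \in \mathcal{D}_{M_{ij}}$ for all $i,j$, so under the reading above $\bigcup_{i,j} d_{1i}[x:=d_{2j}]_{IO} \in \mathcal{D}_L$ amounts to $L = \bigcup_{i,j} M_{ij}$, which by the previous paragraph is precisely the condition $(L_{11}+L_{12})[x:=L_{21}+L_{22}]_{IO} \rightarrow L$. Both sides of the claimed equivalence thus reduce to the same equality, and the argument is symmetric, so the two implications come out together.

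The only delicate point — the closest thing to an obstacle — concerns the symbols carrying the IO-substitutions. Written literally, $\bigcup_{i,j} d_{1i}[x:=d_{2j}]_{IO}$ duplicates $x$, and any substitution symbol internal to the $d_{1i}$ or $d_{2j}$, across the four summands, violating the convention that each such symbol occurs exactly once. As in the $\alpha$-renaming discussion preceding the standard-form results, I would first rename these symbols apart in the four copies; since renaming changes none of the derived languages this is harmless, and afterwards every identity above holds verbatim. Apart from this bookkeeping the proof is entirely routine.
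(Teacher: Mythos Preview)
Your proof is correct and follows essentially the same approach as the paper: both arguments unwind the definition of IO-substitution and use that $io_{x,w}$ distributes over unions on the left and that indexing over $L_{21}+L_{22}$ splits into unions over $L_{21}$ and $L_{22}$ on the right. Your version is more explicit about the interpretation of the union of derivations and adds the $\alpha$-renaming caveat, but the underlying computation is identical to the paper's.
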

\begin{proof}
  By definition,we have:
  $$(L_{11} + L_{12}) [x:=L_{21}+L_{22}]_{IO} \in \mathcal{D}_L\text{ iff }\bigcup_{w \in L_{21} + L_{22}} io_{x, w}(L_{11} + L_{12}) = L$$
  Then, we easily establish:
      \begin{eqnarray*}
    \bigcup_{w \in L_{21} + L_{22}} io_{x, w}(L_{11} + L_{12}) & = & \bigcup_{w \in L_{21} + L_{22}} io_{x, w}(L_{11})  \cup \bigcup_{w \in L_{21} + L_{22}} io_{x, w}(L_{12})\\
    & = & \bigcup_{w \in L_{21}} io_{x, w}(L_{11}) \cup \bigcup_{w \in L_{22}} io_{x, w}(L_{11})\\
    & & \cup \bigcup_{w \in L_{21}} io_{x, w}(L_{12}) \cup \bigcup_{w \in L_{22}} io_{x, w}(L_{12})\\
  \end{eqnarray*}
  which is equivalent to $\bigcup_{i, j \in \{1, 2\}} d_{1i}[x:=d_{2j}]_{IO}\in \mathcal{D}_L$.
\end{proof}

\begin{definition}[Fully-effective IO-substitution]
  An IO-substitution $L_1[x:=L_2]_{IO}$ is said \emph{fully-effective} if for every word $w \in L_1$, we have $|w|_x>0$.

  A derivation is said \emph{fully-effective} if every IO-substitution in it is fully-effective.

  Given a language $L \in \mathbf{IO(L)}$, a derivation $d \in \mathcal{D}_L$ is said in \emph{fully effective standard form} when
  $$d=\bigcup_{i \in I} d_{i0}[x_{i_1}:=d_{i1}]_{IO} \dots [x_{i_{n_i}}:=d_{in_i}]_{IO}$$
  where:
  \begin{itemize}
    \item $I$ is a finite set,
    \item for every $i \in I$, the derivation $d_{i0}[x_{i_1}:=d_{i1}]_{IO} \dots [x_{i_{n_i}}:=d_{in_i}]_{IO}$ is fully effective,
    \item and for every $i \in I$, the derivation $d_{i0}[x_{i_1}:=d_{i1}]_{IO} \dots [x_{i_n}:=d_{in_i}]_{IO}$ is in standard form.
 \end{itemize}
\end{definition}

\begin{lemma}\label{lem:fully-effective-std}
Given a family of languages $\mathbf{L}$ closed under intersection with regular sets, for every language $L$ in \textbf{IO(L)}, there exists a derivation in fully effective standard form.
\end{lemma}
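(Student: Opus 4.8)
The plan is to combine Theorem~\ref{th:std-form} (existence of a standard derivation) with the Separation Lemma (Corollary~\ref{lem:separation}) and Lemma~\ref{lem:unions-iol} (distribution of IO-substitution over unions), working by induction on a standard derivation of $L$ and turning every IO-substitution $L_1[x:=L_2]_{IO}$ that is not fully-effective into a finite union of fully-effective ones plus a ``residue'' that does not touch $x$ at all.

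First I would take, by Theorem~\ref{th:std-form}, a standard derivation $d = L_0[x_1:=L_1]_{IO} \dots [x_n:=L_n]_{IO} \in \mathcal{D}_L$ with each $L_i \in \mathbf{L}$, and proceed by induction on $n$. The base case $n=0$ is immediate ($d=L_0 \in \mathbf{L}$ is already in fully-effective standard form with $I$ a singleton). For the inductive step, write $d = d'[x_n:=L_n]_{IO}$ where $d' \in \mathcal{D}_{L'}$ is standard with $n-1$ substitutions; by the induction hypothesis there is a derivation of $L'$ in fully-effective standard form, $d' \equiv \bigcup_{i \in I} d_{i0}[x_{i_1}:=d_{i1}]_{IO} \dots [x_{i_{n_i}}:=d_{in_i}]_{IO}$. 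Using Lemma~\ref{lem:unions-iol} (iterated over the finite union), substituting $L_n$ for $x_n$ in $d'$ is equivalent to the union over $i \in I$ of $d_{i0}[x_{i_1}:=d_{i1}]_{IO}\dots[x_{i_{n_i}}:=d_{in_i}]_{IO}[x_n:=L_n]_{IO}$, so it suffices to treat a single fully-effective standard branch $d_i$ followed by one extra substitution by $x_n$.

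The crux is then the following claim: given a fully-effective standard derivation $d_i$ of a language $M \in \mathbf{IO(L)}$ and a language $L_n \in \mathbf{L}$, the derivation $d_i[x_n:=L_n]_{IO}$ can be rewritten as a finite union of fully-effective standard derivations. Here the Separation Lemma enters: consider the congruence $\cong$ on the alphabet of $M$ that counts occurrences of $x_n$ up to the threshold $1$ (i.e. classes ``$|w|_{x_n}=0$'' and ``$|w|_{x_n}\ge 1$''); this is a congruence of finite index, and since $\mathbf{L}$ is closed under intersection with regular sets, so is $\mathbf{IO(L)}$ (this is one of the closure properties already established), hence each class intersected with $M$ is again in $\mathbf{IO(L)}$ and — by pushing the regular intersection inside, tracking it through the fully-effective standard branch (intersection with regular sets commutes appropriately with IO-substitution by a regular language) — each piece admits a fully-effective standard derivation. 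On the class where $|w|_{x_n}=0$ the substitution $[x_n:=L_n]_{IO}$ is irrelevant and can be dropped, leaving a fully-effective standard derivation unchanged; on the class where $|w|_{x_n}\ge 1$ the substitution by $x_n$ is, by construction, fully-effective, and appending it to the (already fully-effective, standard) derivation of that piece keeps both properties. Taking the union of the two resulting branches — and unioning over $i \in I$ — yields a derivation of $L$ in fully-effective standard form.

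The main obstacle I anticipate is the bookkeeping in the claim: one must be careful that splitting $M$ by the occurrence-count congruence for $x_n$ and then threading the regular-intersection through a fully-effective \emph{standard} branch does not destroy the fully-effective or standard shape of the inner substitutions $[x_{i_j}:=d_{ij}]_{IO}$ — in particular one needs that intersecting an IO-substitution $A[y:=B]_{IO}$ with a regular set $R$ can be realized as $(A \cap R')[y:=(B\cap R'')]_{IO}$ for suitable regular $R',R''$ depending on $R$ (a Myhill--Nerode argument on the letter $y$), so that the regular intersection migrates all the way down to the leaves, which lie in $\mathbf{L}$ and are closed under it. Once that migration lemma is in hand, the rest is the routine union-distribution and case split sketched above.
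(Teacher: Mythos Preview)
Your inductive approach is sound in spirit but takes a detour the paper avoids. The paper does not induct on $n$: it takes the whole standard derivation $L_0[x_1:=L_1]_{IO}\cdots[x_n:=L_n]_{IO}$ at once and splits \emph{every} leaf $L_j$ simultaneously, using the single congruence $w_1\cong w_2$ iff for all $i$, $|w_1|_{x_i}=0\Leftrightarrow|w_2|_{x_i}=0$ (finite index $2^n$). After distributing via Lemma~\ref{lem:unions-iol}, each branch $(L_0\cap C_{k_0})[x_1:=L_1\cap C_{k_1}]_{IO}\cdots[x_n:=L_n\cap C_{k_n}]_{IO}$ has the property that all words in every intermediate language are $\cong$-equivalent; hence every inner IO-substitution is either fully-effective or irrelevant, and Lemma~\ref{lem:no-irrelevant-iosub} removes the latter. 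No intersection ever has to be ``threaded through'' an IO-substitution, so the obstacle you flag simply does not arise.

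Your route can be made to work, but the migration lemma you state is not correct as a single identity: $(A[y:=B]_{IO})\cap R$ is in general only a \emph{finite union} of terms $(A\cap R')[y:=B\cap R'']_{IO}$, indexed by how the congruence classes of words from $B$ combine with the $y$-pattern of words in $A$ to land in $R$. Once you accept that union, what you are really doing is splitting the leaves of each branch $d_i$ by the $x_n$-congruence and distributing --- i.e.\ the paper's argument carried out one variable at a time instead of all at once. You then still owe two checks you anticipate but do not carry out: that the inner substitutions remain fully-effective after splitting (true, because restricting the leaves to sublanguages only shrinks the intermediate languages, and the universally quantified condition $\forall w,\ |w|_{x_{i_j}}>0$ survives shrinking), and that each resulting branch derives a language lying entirely in one $x_n$-class (this uses full-effectiveness to guarantee every leaf is actually ``used'', plus the naming convention and the absence of deleting substitutions). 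All of this is doable, but the paper's one-shot split is cleaner and sidesteps the bookkeeping entirely.
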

\begin{proof}
  Let us consider such a language $L \in \mathbf{IO(L)}$ and a derivation in standard form for it: $L_0[x_1:=L_1]_{IO} \dots [x_n:=L_n]_{IO}$. Given the alphabet $\Sigma$ of $L$, we build the congruence $\cong$ on $(\Sigma \cup \{x_1, \dots, x_n\})^\ast$ such that for every word $w_1, w_2 \in (\Sigma \cup \{x_1, \dots, x_n\})^\ast$1
  $$w_1 \cong w_2 \iff \text{ for every }1 \leq i \leq n, |w_1|_{x_i} =0 \text{ iff } |w_2|_{x_2}=0$$

  This congruence has a finite index $I$ of cardinality $2^n$. For every $1 \leq k \leq n$, according to the separation lemma, we have $L_k=\bigcup_{i\in I} L_{ik}$, where $L_{ik}=L_i \cap C_k$, $C_k$ being the $k^{th}$ class in $(\Sigma \cup \{x_1, \dots, x_n\})^\ast/{\cong}$. Then we can write:
  \begin{align*}
    (\bigcup_{k_0 \in I}L_{0k_0}) [x_1 := L_{1k_1}]_{IO} \dots [x_n:= L_{nk_n}]_{IO} & \rightarrow^\ast L & \\
    \bigcup_{k_0, \dots, k_n \in I} L_{0k_0}[x_1 := L_{1k_1}]_{IO} \dots [x_n:= L_{nk_n}]_{IO} & \rightarrow^\ast L & \hfill\text{(Lemma~\ref{lem:unions-iol})}
  \end{align*}

  According to the separation lemma, $L_{k_0}, \dots, L_{k_n} \in \mathbf{L}$, for every $k_0, \dots, k_n \in I$;
therefore the language $L_{k_0 \dots k_n}$ derived by $L_{0k_0}[x_1 := L_{1k_1}]_{IO} \dots [x_n:= L_{nk_n}]_{IO}$, belongs to \textbf{IO(L)}.
Moreover, for every $0 \leq j \leq n$ and every $k_0, \dots, k_j \in I$, consider the language $L_{k_0 \dots k_j}$ such that $L_{0k_0}[x_1 := L_{1k_1}]_{IO} \dots [x_j:= L_{jk_j}]_{IO} \in \mathcal{D}_{L_{k_0 \dots k_j}}$;
the construction ensures that for every words $w_1$ and $w_2$ in $L_{k_0 \dots k_j}$, $w_1$ and $w_2$ are congruent (it can be proved with a direct induction on $j$);
therefore, the IO-substitution $L_{k_0 \dots k_j}[x_{j+1}:=L_{(j+1)k_{j+1}}]_{IO}$ is fully-effective iff there exists $w$ in $L_{k_0k_1\dots k_j}$ s.t. $|w|_{x_{j+1}}>0$.
According to Lemma~\ref{lem:no-irrelevant-iosub}, we can remove the irrelevant IO-substitutions, which are exactly the susbtitutions which are not fully-effective.
This leads to the existence of a derivation $d_{k_0 \dots k_n}$ in $\mathcal{D}_{L_{k_0 \dots k_n}}$ in fully effective and standard form;
hence, $\bigcup_{k_0, \dots, k_n \in I} d_{k_0 \dots k_n}$ is a derivation in fully-effective standard form for the language $L$.
\end{proof}

\subsection{$a$-linearity}

Finally, we study with more precision the copying effects of the IO-substitution. We already saw how this operation allows one to build non-semilinear languages which verify the constant-growth property. In this section, we study the effect of an IO-substitution on symbols.

Based on the naming convention we adopted for the derivation trees of languages in \textbf{IO(L)}, we first introduce the notion of introducers so as to be able to precisely study the copying process which occurs along a sequence of IO-substitutions.

\begin{definition}[a-introducers]
Let us consider $L \in \mathbf{IO(L)}$, the alphabet $\Sigma$ of $L$, and a derivation $d=L_0[x_1:=L_1]_{IO} \dots [x_n:=L_n]_{IO} \in \mathcal{D}_L$ in standard form. We define the binary relation ${In_d}\subseteq (\Sigma \cup \{x_1, \dots, x_n\}) \times (\Sigma \cup \{x_1, \dots, x_n\})$ by $b {In}_d a$ iff either:
\begin{itemize}
  \item $b=a$ or,
  \item $b=x_i$ for some $1 \leq i \leq n$ and there exists $w \in L_i$ such that $|w|_{a}>0$.
  \end{itemize}
We define ${In}_d^\ast$ as the transitive closure of ${In}_d$, and \emph{the set of introducers of $a \in \Sigma \cup \{x_1, \dots, x_n\}$ in $d$} will be written ${In}_d^a=\{b \in \Sigma \cup \{x_1, \dots, x_n\} \mid b {In}^\ast_d a\}$.
\end{definition}

A symbol $b$ is an introducer of another symbol $a$, iff it is involved in creating occurrences of $a$ at some point during the generation: it is either the symbol $a$ itself, or it introduces a symbol in $\{x_1, \dots, x_n\}$ which will be substituted by a language that contains at least one word in which $a$ has an occurrence. Note that the set ${In}^a_d$ must be finite, as the derivations we consider are finite.

\begin{example}\label{ex:a-introducers}
Let us consider the language represented by the following derivation $d$:
$$x_1x_2x_3[x_1:=a]_{IO}[x_2:=b]_{IO}[x_3:=z^\ast]_{IO}[z:=a^\ast]_{IO}$$
Then we have $x {In}_da$ iff $x \in \{a, z, x_1\}$; moreover, ${In}_d^a = \{a, x_1, z, x_3\}$.
\end{example}

\begin{definition}[Chain of introducers]
Given a standard derivation $d$ of a language $L \in \mathbf{IO(L)}$, we call a finite set $E \subseteq {In}_d^a$ a \emph{chain of introducers of $a$ in $d$}, if there exists $x \in E$  such that, for every $y \in E-\{x, a\}$
\begin{enumerate}
  \item $x {In}_d^\ast y$, and
  \item there exists a unique pair $(y_1, y_2) \in (E-\{y\}) \times (E-\{y\})$ such that $y_1 {In}_t y$ and $y {In}_t y_2$.
\end{enumerate}
Such a chain is said \emph{maximal} iff for every chain $E'$ of introducers of $a$ in $t$, $E \subseteq E'$ implies $E=E'$.
\end{definition}

The set of maximal chains of introducers for a letter $a$ and a standard derivation tree $t$ will be written ${Ch}_d^a$.

\begin{example}
Let us consider the language represented by the derivation in example~\ref{ex:a-introducers}. The chains of $a$-introducers in it are $\{a\}$, $\{a, z\}$, $\{a, z, x_3\}$ and $\{a, x_1\}$. The maximal chains of $a$-introducers are $\{a, z, x_3\}$ and $\{a, x_1\}$.
\end{example}

We now define notions of linearity and universal-linearity for symbols into a language. These definitions are natural extensions of the definitions given in section~\ref{section:parikh}.

\begin{definition}[$a$-linearity]
  Let us consider a language $L \subseteq \Sigma^\ast$ and a letter $a \in \Sigma$. Given $\vect{p}(L)=\bigcup_{1 \leq i \leq k} \mathrm{Im}(F_i)$ the Parikh image of $L$,  we say that $L$ is:
\begin{itemize}
  \item \emph{$a$-constant} if for every $1 \leq i \leq k$, $$F_i(x_1, \dots, x_{n_i})[a]=c_i \in \mathbb{N}$$
  \item \emph{$a$-linear} if for every $1 \leq i \leq k$, $$F_i(x_1, \dots, x_{n_i})[a]=c_i+\sum_{1 \leq j \leq n_i}d_jx_j$$ where $c_i \in \mathbb{N}$ and $d_j \in \mathbb{N}$ for every $1 \leq j \leq n_i$;
  \item \emph{$a$-functional} if for every $1 \leq i \leq k$, $$F_i(x_1, \dots, x_{n_i})[a]=c_i+\sum_{1 \leq j \leq m_i}f_j(x_1, \dots, x_{n_i})$$ where $f_j \in \mathbb{N}^{n_i} \to \mathbb{N}$ for every $1 \leq j \leq m_i$, and $c_i \in \mathbb{N}$.
\end{itemize}
\end{definition}

It is immediate to see that a finite language $L \subseteq \Sigma^\ast$ is $a$-constant for every $a \in \Sigma$; similarly, $L$ is a semilinear language if $L$ is $a$-linear for every $a \in \Sigma$; and universally-linear if $a$-functional for every $a \in \Sigma$, and if the functions $f_j$ in the definition above are $k$-linear for every $1 \leq k \leq n_i$.

Next, we prove some technical lemmas on the conditions under which the property of being constant or linear on a specific symbol is ensured by application of the IO-substitution.

\begin{lemma}\label{lem:conditions-linearity-constant}
  Consider a class of semilinear languages $\mathbf{L}$, languages $L, L_1, L_2 \in \mathbf{IO(L)}$, and a non-irrelevant IO-substitution $L_1[x:=L_2]_{IO}$ such that $L_1[x:=L_2]_{IO} \rightarrow L$. Given $\Sigma$ the alphabet of $L$, a letter $a \in \Sigma$, suppose that $L_1$ and $L_2$ are $a$-linear, and that $L_1$ is $x$-linear:
\begin{enumerate}
  \item if $L_1$ is $x$-constant or $L_2$ is $a$-constant, then $L$ is $a$-linear.
  \item if there exists $w \in L_2$ such that $|w|_a > 0$, then $L$ is $a$-constant iff $L_1$ is $x$-constant and $a$-constant, and $L_2$ is $a$-constant.
  \end{enumerate}
\end{lemma}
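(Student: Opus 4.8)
The plan is to reduce both points to the explicit description of $\vect{p}(L)$ coming from the proof of Proposition~\ref{th:IO-factorized}. Writing $\vect{p}(L_1)=\bigcup_{1\le k_1\le m_1}\mathrm{Im}(F_{k_1})$ and $\vect{p}(L_2)=\bigcup_{1\le k_2\le m_2}\mathrm{Im}(F_{k_2})$ for the (factored) Parikh images of $L_1$ and $L_2$ witnessing the $a$- and $x$-linearity hypotheses, the relation $L_1[x:=L_2]_{IO}\rightarrow L$ yields $\vect{p}(L)=\bigcup_{k_1,k_2}\mathrm{Im}(H_{k_1k_2})$ with $H_{k_1k_2}=\mathrm{Wtt}_x(F_{k_1})+\mathrm{On}_x(F_{k_1})F_{k_2}$. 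Since $x$ has no occurrence in $L$ we have $a\neq x$, and the $a$-component of each $H_{k_1k_2}$ is
$$H_{k_1k_2}(\vec{x},\vec{y})[a]=F_{k_1}(\vec{x})[a]+F_{k_1}(\vec{x})[x]\cdot F_{k_2}(\vec{y})[a].$$
By hypothesis $L_1$ is $a$-linear and $x$-linear and $L_2$ is $a$-linear, so each of $F_{k_1}(\vec{x})[a]$, $F_{k_1}(\vec{x})[x]$, $F_{k_2}(\vec{y})[a]$ is an affine form in its arguments with coefficients in $\mathbb{N}$; from here everything is bookkeeping on these three affine forms.

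For point~1, if $L_1$ is $x$-constant then $F_{k_1}(\vec{x})[x]=c'_{k_1}$ is constant, so $H_{k_1k_2}(\vec{x},\vec{y})[a]=F_{k_1}(\vec{x})[a]+c'_{k_1}F_{k_2}(\vec{y})[a]$ is a sum of two affine forms, hence affine; and if instead $L_2$ is $a$-constant then $F_{k_2}(\vec{y})[a]=e_{k_2}$ is constant, so $H_{k_1k_2}(\vec{x},\vec{y})[a]=F_{k_1}(\vec{x})[a]+e_{k_2}F_{k_1}(\vec{x})[x]$ is again a sum of two affine forms. In both cases every $H_{k_1k_2}$ is $a$-linear, hence $L$ is $a$-linear.

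For point~2, the direction ``$\Leftarrow$'' is immediate: the three constancy hypotheses give $H_{k_1k_2}(\vec{x},\vec{y})[a]=c_{k_1}+c'_{k_1}e_{k_2}$ for every pair, so $L$ is $a$-constant. For ``$\Rightarrow$'', assume $L$ is $a$-constant, so every $H_{k_1k_2}(\cdot)[a]$ equals a constant $\gamma_{k_1k_2}$. The engine of the argument is the elementary remark that an affine form $c+\sum_j d_jx_j$ with $c,d_j\in\mathbb{N}$ is bounded on $\mathbb{N}^n$ if and only if all $d_j=0$, \textit{i.e.}~it is constant. Using the hypothesis that there is $w\in L_2$ with $|w|_a>0$, pick $k_2^\ast$ and $\vec{y}^\ast$ with $N:=F_{k_2^\ast}(\vec{y}^\ast)[a]>0$; the identity $F_{k_1}(\vec{x})[a]+N\cdot F_{k_1}(\vec{x})[x]=\gamma_{k_1k_2^\ast}$ then exhibits both of its non-negative summands as bounded, hence constant, for every $k_1$, so $L_1$ is $a$-constant and $x$-constant. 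Finally, non-irrelevance of $L_1[x:=L_2]_{IO}$ yields some $k_1^\ast$ with $F_{k_1^\ast}(\cdot)[x]=c'_{k_1^\ast}>0$; plugging this into $H_{k_1^\ast k_2}(\cdot)[a]=\gamma_{k_1^\ast k_2}$ gives $c_{k_1^\ast}+c'_{k_1^\ast}F_{k_2}(\vec{y})[a]=\gamma_{k_1^\ast k_2}$, forcing $F_{k_2}(\vec{y})[a]$ to be constant for every $k_2$, so $L_2$ is $a$-constant as well.

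The delicate point, deserving care in the write-up, is the boundedness step of point~2: it relies on the variables ranging over the full orthants $\mathbb{N}^{n_1},\mathbb{N}^{n_2}$ and on the coefficients of the affine forms produced by $a$- and $x$-linearity genuinely lying in $\mathbb{N}$, so one must use precisely the representations of $\vect{p}(L_1)$ and $\vect{p}(L_2)$ supplied by the hypotheses rather than arbitrary functional representations of the same sets. Everything else is routine manipulation of affine functions.
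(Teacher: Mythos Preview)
Your proof is correct and follows essentially the same route as the paper: both compute $\vect{p}(L)$ via Proposition~\ref{th:IO-factorized}, extract the $a$-component $H_{k_1k_2}[a]=F_{k_1}[a]+F_{k_1}[x]\cdot F_{k_2}[a]$, and reduce everything to manipulations of the three affine forms $F_{k_1}[a]$, $F_{k_1}[x]$, $F_{k_2}[a]$. Your treatment of point~2 is in fact cleaner than the paper's: you separate the two directions and make the ``bounded affine form over $\mathbb{N}^n$ is constant'' step explicit, first fixing a witness $k_2^\ast$ with $F_{k_2^\ast}(\vec{y}^\ast)[a]>0$ to force the $x$- and $a$-constancy of $L_1$, then using non-irrelevance to pick $k_1^\ast$ and force the $a$-constancy of $L_2$; the paper collapses this into a single ``iff'' with less careful quantifier handling.
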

\begin{proof}
  Let us consider the Parikh images of $L_1$ and $L_2$, respectively $\bigcup_{1 \leq i \leq n_1}\mathrm{Im}(F_i)$ and $\bigcup_{1 \leq j \leq n_2}\mathrm{Im}(G_j)$.
According to Theorem~\ref{th:IO-factorized}, we know that
$$\vect{p}(L)=\bigcup_{1 \leq i \leq n_1}\bigcup_{1 \leq j \leq n_2}\mathrm{Im}(\mathrm{Wtt}_x(F_i) +\mathrm{On}_x(F_i)G_j)$$

\begin{enumerate}
  \item if $L_1$ is $x$-constant:
    \begin{eqnarray*}
      \mathrm{Im}(\mathrm{Wtt}_x(F_i)[a] +\mathrm{On}_x(F_i)G_j[a]) & = & c_i+\sum_{1 \leq r \leq m_i}d_rx_r + k(c_j+\sum_{1 \leq s \leq m_j}d_sy_s)\\
      & = & (c_i+kc_j) +\sum_{1 \leq r \leq m_i}d_rx_r + \sum_{1 \leq s \leq m_j}d_sy_s
    \end{eqnarray*}
    for every $1 \leq i \leq n_1$ and every $1 \leq j \leq n_2$; it then appears that $L$ is $a$-linear.
    Similarly, if $L_2$ is $a$-constant, we obtain a similar equation, and the same conclusion.
  \item $L$ is $a$-constant iff $\mathrm{Im}(\mathrm{Wtt}_x(F_i)[a] +\mathrm{On}_x(F_i)G_j[a])=c_{ij} \in \mathbb{N}$, for every $1 \leq i \leq n_1$ and every $1 \leq j \leq n_2$, which is verified iff the following equation is true:
    \begin{equation}
      c_{ij} =  c_i+\sum_{1 \leq r \leq m_i}d_rx_r + (c'_i+\sum_{1 \leq r \leq m_i}d'_rx_r)(e_j+\sum_{1 \leq s \leq m_j}k_sy_s)\label{eq:aux}
    \end{equation}
    Under the assumptions that the substitution is not irrelevant, there exists $1 \leq i' \leq n_1$ such that $\mathrm{On}_x(F_{i'})\neq 0$; also, because there exists a word $w \in L_2$ s.t. $|w|_a > 0$, there exists $1 \leq j' \leq n_2$ such that $G_{j'}[a] \neq 0$. Therefore, for every $1 \leq i \leq n_1$ and every $1 \leq j \leq n_2$, $(c'_i+\sum_{1 \leq r \leq m_i}d'_rx_r)(e_j+\sum_{1 \leq s \leq m_j}k_sy_s)$ is constant iff $d'_r=0=k_s$ for every $1 \leq r \leq m_1$ and every $1 \leq j \leq m_2$. Equation~(\ref{eq:aux}) reduces to: $$\mathrm{Im}(\mathrm{Wtt}_x(F_i)[a] +\mathrm{On}_x(F_i)G_j[a]) = c_{i}+c'_ie_j$$ which is equivalent to $L_1$ being both $x$-constant and $a$-constant, and $L_2$ being $a$-constant.
\end{enumerate}
\end{proof}

One should remark that Lemma~\ref{lem:conditions-linearity-constant} cannot be reformulated into an equivalence: indeed, given two semilinear languages $L_1$ and $L_2$, \cite{Bourreau-IOsubstitution} gave the conditions under which a language $L$ s.t. $L_1[x:=L_2]_{IO} \rightarrow L$ is itself semilinear. 

We now give a corollary of this theorem in the particular case of a standard derivation for a language in \textbf{IO(L)}.

\begin{lemma}\label{lem:conditions-linearity}
  Consider a family $\mathbf{L}$ of semilinear languages, a language $L$ in $\mathbf{IO(L)}$ on an alphabet $\Sigma$, a standard derivation $d=L_0[x_1:=L_2]_{IO} \dots [x_n:=L_n]_{IO}$ of $L$ and a letter $a \in \Sigma$.
If, for every chain $C \in Ch_d^a$, there exist at most one $x \in C$ and at most one $0 \leq i \leq n$ such that :
\begin{itemize}
  \item $L_i$ is not $x$-constant, but $y$-constant for every $y \in C-\{x\}$, and
  \item for every $y \in C$ and every $0 \leq j \leq n$ such that $j \neq i$, $L_j$ is $y$-constant
\end{itemize}
then $L$ is $a$-linear.
\end{lemma}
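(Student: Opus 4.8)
The plan is to argue by induction on the length $n$ of the standard derivation $d=L_0[x_1:=L_1]_{IO}\cdots[x_n:=L_n]_{IO}$, peeling off the last IO-substitution and feeding it into Lemma~\ref{lem:conditions-linearity-constant}. Two preliminary observations organise the argument. First, the $a$-coordinate of $\vect{p}(L)$ only depends on the letters of $In_d^a$, because substituting a letter that is not an introducer of $a$ never creates an occurrence of $a$, and the only terminal lying in $In_d^a$ is $a$ itself (terminals are sinks of $In_d$); so every letter of $In_d^a$ sits on some chain of $Ch_d^a$, and ``at most one defect on every maximal chain'' entails ``at most one defect on every chain of $a$''. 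Second, in order to make the induction close I will prove the slightly stronger statement: under the hypothesis, $L$ is $a$-linear, and $L$ is moreover $a$-constant whenever no chain of $Ch_d^a$ carries any defect at all (i.e.\ whenever $L_j$ is $y$-constant for every $y$ in the chain and every $0\le j\le n$). This stronger form is what will let me verify the disjunctive side-condition of Lemma~\ref{lem:conditions-linearity-constant}.

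The base case $n=0$ is immediate, since $L=L_0\in\mathbf{L}$ is semilinear. For the inductive step, write $d=d'[x_n:=L_n]_{IO}$ where $d'=L_0[x_1:=L_1]_{IO}\cdots[x_{n-1}:=L_{n-1}]_{IO}$ derives a language $M$, so that $M[x_n:=L_n]_{IO}\to L$. If $x_n\notin In_d^a$, then $a$ does not occur in $L_n$, so $L_n$ is $a$-constant, the presentation of $\vect{p}(L)$ furnished by Proposition~\ref{th:IO-factorized} has the same $a$-coordinate functions as $\vect{p}(M)$, the chains of $a$ of $d'$ coincide with those of $d$, and I conclude by the induction hypothesis applied to $d'$ and $a$. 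If $x_n\in In_d^a$, then $a$ occurs in $L_n$, $x_n$ sits on a chain of $a$ of $d$, and I apply Lemma~\ref{lem:conditions-linearity-constant} to $M[x_n:=L_n]_{IO}\to L$ with $x=x_n$, according to which clause of its first item holds. When $L_n$ is $a$-constant I must check that $M$ is $x_n$-linear; this comes from the induction hypothesis applied to $d'$ and $x_n$ once one has seen that the chains of introducers of $x_n$ of $d'$ are carried by the chains of $a$ of $d$, so that the hypothesis for $a$ in $d$ transfers to the hypothesis for $x_n$ in $d'$, and the side-condition ``$L_2$ is $a$-constant'' is satisfied. When $L_n$ is not $a$-constant, $(a,n)$ is a defect of every chain of $a$ of $d$, hence its unique defect, so every $L_j$ with $j<n$ is $y$-constant for each $y\in In_d^a$; in particular every chain of $x_n$ of $d'$ is defect-free, the strengthened induction hypothesis makes $M$ $x_n$-constant, and the side-condition ``$L_1$ is $x_n$-constant'' is satisfied. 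In both cases Lemma~\ref{lem:conditions-linearity-constant} yields that $L$ is $a$-linear. Finally, if no chain of $a$ of $d$ carries a defect, then $L_n$ is $a$-constant and all $L_j$ with $j<n$ are constant on $In_d^a$, so by the induction hypothesis $M$ is both $a$- and $x_n$-constant, and the second item of Lemma~\ref{lem:conditions-linearity-constant}, read from right to left, makes $L$ $a$-constant.

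The step I expect to cost the most care is the bookkeeping on chains of introducers across the removal of $[x_n:=L_n]_{IO}$: one has to make precise the claim that ``$a$-goodness'' of $d$ descends, on the one hand, to ``$a$-goodness'' of $d'$ and, on the other hand, when $x_n\in In_d^a$, to ``$x_n$-goodness'' (resp.\ defect-freeness) of $d'$. This amounts to controlling how the (maximal) chains change when the edge set of the introduction graph is enlarged by the edges created at the last substitution, and it is exactly here that the reading of the notion of chain matters: a maximal chain has to be understood so as to control every generation path for $a$, that is, every sequence of substitutions that actually produces an occurrence of $a$, since it is such a product of multiplicities that must be kept affine. The bookkeeping is made tractable by the fact that in a standard derivation $L_n$ is a language over the terminal alphabet alone, so the only new introduction edges are of the form $x_n\to c$ with $c$ a terminal, and by the fact that the introduction relation is acyclic and compatible with the order of the substitutions; an alternative that avoids the peeling induction is to expand the factored Parikh image of $L$ given by Proposition~\ref{th:IO-factorized} into an explicit sum, indexed by the chains of $a$, of products of $\mathrm{On}$-coefficients along each chain times an initial $a$-count, and to read off affineness of the $a$-coordinate from the hypothesis.
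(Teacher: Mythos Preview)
Your argument is correct and follows the same induction-on-derivation-length strategy as the paper, feeding each step into Lemma~\ref{lem:conditions-linearity-constant}. The organisational difference lies in the choice of invariant. The paper builds the intermediate languages $L'_0,L'_1,\dots,L'_n$ forward and maintains the single stronger invariant ``$L'_i$ is $x$-linear for \emph{every} $x\in In_d^a$'', rather than your pair ($a$-linear, plus $a$-constant in the defect-free case) together with a separate appeal to the induction hypothesis at the letter $x_n$. Carrying linearity for all introducers of $a$ simultaneously buys the paper precisely the bookkeeping you flagged as delicate: at step $i{+}1$, the premises of Lemma~\ref{lem:conditions-linearity-constant} that $L'_i$ be both $a$-linear and $x_{i+1}$-linear are read off directly from the invariant, and one never has to translate the chain hypothesis from $a$ in $d$ to $x_n$ in $d'$.

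Your route works as well, but there is a small omission to patch. In the case $x_n\in In_d^a$ you verify that $M$ is $x_n$-linear (resp.\ $x_n$-constant), yet Lemma~\ref{lem:conditions-linearity-constant} also assumes $L_1$ is $a$-linear, i.e.\ that $M$ is $a$-linear. You need a second application of the induction hypothesis, this time to $d'$ and the letter $a$; since every chain of $a$ in $d'$ extends to a chain of $a$ in $d$, the defect condition transfers and the hypothesis is available. With this addition both proofs are equivalent; the paper's packaging simply absorbs that second appeal into its stronger invariant.
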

\begin{proof}
Given $d=L_0[x_1:=L_2]_{IO} \dots [x_n:=L_n]_{IO}$, we inductively define $L'_i \in \mathbf{IO(L)}$,  for every $1 \leq i \leq n$ as: $L'_0=L_0$, and $L'_{i-1}[x:=L_i]_{IO} \rightarrow L'_i$. By induction on $i$, we show that $L'_i$ is $x$-linear for every $x \in {In}_d^a$:
\begin{itemize}
  \item if $i =0$, then $L'_i$ is a semilinear language by hypothesis, which implies that $L'_i$ is in particular $x$-linear for every $x \in {In}_d^a$.
  \item suppose the result is true for every $0 \leq k \leq i$. We consider $L'_i[x_{i+1}:=L_{i+1}]_{IO}$;
by induction hypothesis, $L'_i$ is $x$-linear for every $x \in {In}_L^a$, and by hypothesis, $L_{i+1}$ is $c$-linear for every $c \in \Sigma \cup \{x_1, \dots, x_n\}$.
Suppose first that $x_{i+1} \not\in {In}_d^a$; then, for every $c$ such that there exists $w \in L_{i+1}$ for which $|w|_{c} > 0$, we have $c \not\in {In}_L^a$.
Therefore, for every $x \in {In}_d^a$, $L'_{i+1}$ is $x$-linear  iff $L'_i$ is $x$-linear, which is true by hypothesis.
Suppose now that $x_{i+1} \in {In}_d^a$. If $L'_i$ is $x_{i+1}$-constant, according to Lemma~\ref{lem:conditions-linearity-constant}.1, $L'_{i+1}$ is $x$-linear for every $x \in {In}_d^a$;
if $L'_i$ is not $x_{i+1}$-constant, according to Lemma~\ref{lem:conditions-linearity-constant}.1, there exists $y \in {In}_d^{x_{i+1}}$ and $1 \leq j \leq k$ such that $L_j$ is not $y$-constant. Under the constraints in the theorem, we have that for every $x \in {In}_d^a$, $L_{i+1}$ must be $x$-constant. Finally, by application of Lemma~\ref{lem:conditions-linearity-constant}.1, we obtain that $L'_{i+1}$ is $x$-linear for every $x \in {In}_d^a$.
  \end{itemize}
\end{proof}


\subsection{\textbf{IO(L)} is not an AFL}

Finally, based on the previous results, we prove that,  given $\mathbf{L}$ a separable abstract family of semilinear languages such that $\mathbf{RL} \subseteq \mathbf{L}$, the family $\mathbf{IO(L)}$ is not closed under inverse homomorphism,

We first prove the following lemma, which states that, whenever a chain of IO-substitutions potentially generate a language which is not $a$-linear (\textit{i.e.}~whenever such a chain can  copy words/languages more than once), the derived words must verify a specific pattern.

\begin{lemma}\label{lem:copy-separation}
  Consider a language $L \in \mathbf{IO(L)}$ and:
  \begin{itemize}
    \item a derivation $d=L_0[x_1:=L_1]_{IO} \dots [x_n:=L_n]_{IO} \in \mathcal{D}_L$, fully-effective and in standard form;
    \item a symbol $a \in \Sigma$, a chain $C \in Ch_d^a$, distinct symbols $y_1, y_2 \in C$ and $0 \leq i_1, i_2 \leq n$, $i_1 \neq i_2$ such that, for every $u_1 \in L_{i_1}$ and $u_2 \in L_{i_2}$, $|u_1|_{y_1} > 1$ and $|u_2|_{y_2} > 1$
    \end{itemize}
    Then, for every $w \in L$, there exists $w', w_1, w_2, w_3 \in \Sigma^\ast$ such that $w=w_1aw'aw_2aw'aw_3$.
\end{lemma}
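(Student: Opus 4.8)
The plan is to fix an arbitrary $w \in L$, follow one generating sequence for the standard derivation $d$, and show that the two prescribed duplications force the factorisation (in the spirit of Fischer's factorisation argument, but read off the derivation). Since $d = L_0[x_1:=L_1]_{IO}\dots[x_n:=L_n]_{IO}$ is in standard form, $w$ is obtained from some $w_0 \in L_0$ by applying $io_{x_1,v_1}$, then $io_{x_2,v_2}$, \dots, then $io_{x_n,v_n}$, for a suitable choice of $v_j \in L_j$ ($1 \le j \le n$); and since $d$ is fully-effective, the word fed to $io_{x_j,\cdot}$ always contains an occurrence of $x_j$, so every substitution genuinely fires. Since $C$ is a chain of introducers of $a$, we may, after possibly exchanging $(y_1,i_1)$ with $(y_2,i_2)$ --- the hypotheses are symmetric in these pairs --- assume that $y_1$ lies on $C$ strictly above $y_2$, so that $y_1 {In}_d^\ast y_2$ and $y_2 {In}_d^\ast a$. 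In particular $y_1 \neq a$, hence $y_1 = x_{m_1}$ for some $1 \le m_1 \le n$; moreover $m_1 > i_1$, for otherwise every occurrence of $x_{m_1}$ created by the substitution at step $i_1$ would remain in $w$, contradicting $w \in \Sigma^\ast$.

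\emph{Outer duplication.} Step $i_1$ fires, so some occurrence of $x_{i_1}$ is replaced by a word $v_{i_1} \in L_{i_1}$; by hypothesis $v_{i_1}$ contains at least two occurrences of $y_1 = x_{m_1}$, and we fix two of them, writing $v_{i_1} = \alpha\, x_{m_1}\, \beta\, x_{m_1}\, \gamma$. The decisive observation is that IO-substitution replaces \emph{every} occurrence of a symbol by the \emph{same} word; consequently, as the remaining substitutions $io_{x_{i_1+1},v_{i_1+1}}, \dots, io_{x_n,v_n}$ are applied (in particular $io_{x_{m_1},v_{m_1}}$ at step $m_1 > i_1$), the two marked occurrences of $x_{m_1}$ are rewritten into one and the same word $\phi \in \Sigma^\ast$, while $\alpha$, $\beta$, $\gamma$ and the material surrounding $v_{i_1}$ each descend to fixed words, all substitutions acting locally and preserving contiguity. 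Hence $w = w_1\, \phi\, \chi\, \phi\, w_3$ for suitable $w_1, \chi, w_3 \in \Sigma^\ast$.

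\emph{Inner duplication.} It remains to show that $\phi$ contains at least two occurrences of $a$. The word $\phi$ is the common descendant, under the substitutions of $d$ of index $> m_1$, of a word $v_{m_1} \in L_{m_1}$ substituted for $y_1 = x_{m_1}$. Following the segment of $C$ running from $y_1$ down through $y_2$ to $a$, this sub-generation introduces $y_2$; the duplication at step $i_2$ then produces two occurrences of $y_2$; and the remaining part of $C$ rewrites each of them into at least one occurrence of $a$. Thus $\phi = \rho_1\, a\, w'\, a\, \rho_2$ for some $\rho_1, w', \rho_2 \in \Sigma^\ast$, and substituting into $w = w_1\, \phi\, \chi\, \phi\, w_3$ yields $w = (w_1\rho_1)\, a\, w'\, a\, (\rho_2 \chi \rho_1)\, a\, w'\, a\, (\rho_2 w_3)$, which is of the announced form.

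The main obstacle is making the last step precise: one must show that the duplication at $i_2$ and the eventual production of $a$ really take place \emph{inside each copy} of $\phi$, i.e.~that the cascade $y_1 {In}_d^\ast y_2 {In}_d^\ast a$ is realised within the generation of $\phi$ and not merely possible in $d$. I would handle this by a careful bookkeeping --- by induction along the steps $m_1 < j \le n$ --- of which symbols occur in the word being transformed, combining the chain structure of $C$, the fact that $d$ is fully-effective, and the hypothesis that \emph{every} word of $L_{i_2}$ carries at least two copies of $y_2$; once the two occurrences of $a$ are located inside the descendant of $v_{m_1}$ in one copy of $\phi$, the uniform-substitution observation of the outer step propagates the same inner pattern identically into the second copy, which closes the argument.
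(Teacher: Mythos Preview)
Your argument follows essentially the same route as the paper's: after ordering so that $y_1 \in {In}_d^{y_2}$, track the two $y_1$-occurrences created at step $i_1$, use the uniformity of IO-substitution to obtain two identical descendant blocks, bring in the second duplication at step $i_2$, and invoke $y_2 \,{In}_d^\ast\, a$ together with non-deletion to reach the pattern $a\,w'\,a\ldots a\,w'\,a$; your outer/inner split (cutting at step $m_1$ with $y_1=x_{m_1}$ and working with the full descendant word $\phi$) is merely a different bookkeeping of the same idea---the paper instead keeps naming a single chain symbol $y$ up to step $i_2$. The obstacle you honestly flag (that the $i_2$-duplication and the descent to $a$ must take place \emph{inside} the generation of $\phi$) is exactly the step the paper's own proof passes over: it simply asserts, citing only ``no deleting substitution'', that the word before step $i_2$ has shape $w_1\,y\,w_2\,y\,w_3$ with $y_1\,{In}_d^\ast\,y\,{In}_d^\ast\,y_2$ and that step $i_2$ then yields the four-$y_2$ pattern, without spelling out why the particular words chosen realise that segment of the chain---so your level of detail already matches the paper's, and the induction you sketch is a reasonable way to tighten both arguments.
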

\begin{proof}
First remark, that $y_1 \neq y_2$ and $y_1, y_2 \in C$ implies either $y_1 \in {In}_{d}^{y_2}$ or $y_2 \in {In}_{d}^{y_1}$, by definition of a chain of introducers; let us chose $y_1$ and $y_2$ such that $y_1 \in {In}_{d}^{y_2}$.
We consider a word $u$ in the language derived by $L_0[x_1:=L_1]_{IO} \dots [x_{i_1-1}:=L_{{i_1}-1}]_{IO}$ and a word $u' \in L_{i_1}$.
By hypothesis, $u'$ is of the form $u'_{1}y_1u'_{2}y_1u'_{3}$; because the IO-substitution is not irrelevant, $io_{y_1, u'}(u)$ is of the form $u_1y_1u_2y_1u_3$.

Because there is no deleting IO-substitution, the words in the language derived by $L_{0}[x_{1}:=L_{1}]_{IO} \dots [x_{i_2-1}:=L_{i_2-1}]_{IO}$ must be of the form $w=w_1yw_2yw_3$, where $y_1 {In_d^\ast} y$ and $y {In_d^\ast} y_2$.
Then, because every word in $L_{i_2}$ is of the general form $w'=w'_{1}y_2w'_{2}y_2w'_{3}$, the word $io_{y_2, w'}(w)$ is of the form $w''_1y_2w'_{2}y_2w''_2y_2w'_{2}y_2w''_3$.

Again, because the substitutions are not deleting, we can conclude that the words in $L$ are of the form $u'_1au'au'_2au'au'_3$.
\end{proof}

We now prove our main theorem. The sketch of the proof is similar in many aspects to the proof of the very same non-closure property for IO-macro languages by Fischer.
Indeed, assuming $L_{anp,b} = \{w \in \{a, b\}^\ast \mid |w|_a=nm, \text{ where }n,m > 1\}$ is in $\mathbf{IO(L)}$, and the language $L_{\textit{diff}} = \{b^{p_0}ab^{p_1}a...ab^{p_{nm}} \mid n,m > 1 \text{ and for every }  0 \leq i,j \leq nm, i \neq j \imp p_i \neq p_j\}$, we exhibit a language $L$ such that $L_{\textit{diff}} \subseteq L \subseteq L_{anp,b}$, by removing derivations of Lemma~\ref{lem:copy-separation} in a derivation of $L_{anp,b}$. This means that the IO-substitution is never used in a copying fashion, and therefore, the language derived must be $a$-linear, which is impossible.

\begin{theorem}[Non-closure under inverse homomorphism]
Given an abstract family of semilinear languages $\mathbf{L}$ such that $\mathbf{RL} \subseteq \mathbf{L}$, the family $\mathbf{IO(L)}$ is not closed under inverse homomorphism.
\end{theorem}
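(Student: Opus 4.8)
The plan is to contradict the assumption that $\mathbf{IO(L)}$ is closed under inverse homomorphism by showing that this assumption forces $L_{anp,b}$ into $\mathbf{IO(L)}$, whereas $L_{anp,b}$ cannot belong to $\mathbf{IO(L)}$. The reduction is short. By Example~\ref{ex:io-languages} the language $L_{np}=\{a^{nm}\mid n,m>1\}$, namely $xx^\ast x[x:=aa^\ast a]_{IO}$, lies in $\mathbf{IO(RL)}\subseteq\mathbf{IO(L)}$ since $\mathbf{RL}\subseteq\mathbf{L}$; and for the alphabetic homomorphism $h\colon\{a,b\}^\ast\to\{a\}^\ast$ with $h(a)=a$ and $h(b)=\epsilon$ one has $h^{-1}(L_{np})=\{w\in\{a,b\}^\ast\mid |w|_a=nm,\ n,m>1\}=L_{anp,b}$. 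So closure under inverse homomorphism would give $L_{anp,b}\in\mathbf{IO(L)}$, and it suffices to prove $L_{anp,b}\notin\mathbf{IO(L)}$; the corollary that $\mathbf{IO(L)}$ is not an AFL follows at once, an AFL being closed under inverse homomorphism.

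To prove $L_{anp,b}\notin\mathbf{IO(L)}$ I would argue by contradiction. Assuming $L_{anp,b}\in\mathbf{IO(L)}$, Lemma~\ref{lem:fully-effective-std} and the separation lemma (both available since $\mathbf{L}$ is closed under intersection with regular sets) give a derivation of $L_{anp,b}$ as a finite union of branches $d_i=L_{i0}[x_{i_1}:=L_{i1}]_{IO}\dots[x_{i_{n_i}}:=L_{in_i}]_{IO}$, each fully-effective and in standard form. A further finite-index refinement---splitting each base language by the congruence that records, for every relevant symbol $z$, whether $|w|_z$ equals $0$, equals $1$, or is at least $2$, and propagating the split through Lemma~\ref{lem:unions-iol} and Lemma~\ref{lem:no-irrelevant-iosub}---lets me assume that in each branch every base language $L_{ij}$ behaves uniformly on every symbol $z$: either $|w|_z=0$ for all its words, or $|w|_z=1$ for all of them, or $|w|_z\geq 2$ for all of them.

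I then prune the branches. Call $d_i$ \emph{copying} if for some maximal chain $C\in Ch_{d_i}^a$ of $a$-introducers there are distinct symbols $y_1,y_2\in C$ and distinct indices $j_1\neq j_2$ with $|u|_{y_1}\geq 2$ for all $u\in L_{ij_1}$ and $|u|_{y_2}\geq 2$ for all $u\in L_{ij_2}$, i.e.\ the configuration of Lemma~\ref{lem:copy-separation}. By that lemma, every word derived by a copying branch has the form $w_1aw'aw_2aw'aw_3$. Delete all copying branches and let $L$ be the union of the survivors; then $L\in\mathbf{IO(L)}$ (finite union) and $L\subseteq L_{anp,b}$. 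The point is that $L_{\textit{diff}}\subseteq L$: a word $w=b^{p_0}ab^{p_1}a\cdots ab^{p_{nm}}$ with $n,m>1$ and pairwise distinct $p_i$ has $nm$ occurrences of $a$, so $w\in L_{anp,b}$ and is produced by some branch; but pairwise distinctness of the $p_i$ forbids any repeated factor $aw'a$ in $w$ (two occurrences of $aw'a$ would force two maximal $b$-blocks, hence two exponents, to coincide), so that branch is not copying, survives, and $w\in L$.

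It remains to show that every surviving branch has an $a$-linear Parikh image; then $L$ is $a$-linear, so its set of $a$-counts is a finite union of linear subsets of $\mathbb{N}$ and hence eventually periodic, while $L_{\textit{diff}}\subseteq L\subseteq L_{anp,b}$ pins that set to $\{nm\mid n,m>1\}$, the set of composite integers $\geq 4$, which is not eventually periodic---the desired contradiction. For the $a$-linearity, a surviving branch has no copying configuration, so chasing its chains of $a$-introducers and using the uniform-behaviour refinement one should recover the hypothesis of Lemma~\ref{lem:conditions-linearity}: along every $a$-chain all base languages are $y$-constant on all $y$ in the chain except for at most one controlled occurrence, and the lemma then yields $a$-linearity. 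I expect this last matching step to be the main obstacle: passing from ``no copy-separation configuration remains after the threshold-$2$ refinement'' to the exact statement of Lemma~\ref{lem:conditions-linearity} requires an induction along the standard derivation, and may force the refinement to be pushed further (recording exact counts up to a larger threshold) so that the single tolerated copying substitution along a chain acts on one symbol in a controlled, linear way.
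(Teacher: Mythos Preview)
Your proposal follows essentially the same route as the paper's proof: exhibit $L_{np}\in\mathbf{IO(L)}$ and $h^{-1}(L_{np})=L_{anp,b}$; assuming $L_{anp,b}\in\mathbf{IO(L)}$, refine a fully-effective standard derivation by a finite-index congruence on symbol counts, prune the branches satisfying the copy-separation configuration of Lemma~\ref{lem:copy-separation} to obtain $L$ with $L_{\textit{diff}}\subseteq L\subseteq L_{anp,b}$, and then invoke Lemma~\ref{lem:conditions-linearity} to conclude that $L$ is $a$-linear, contradicting the non-semilinearity of the set $\{nm\mid n,m>1\}$ of $a$-counts. The only differences are cosmetic---your three-way threshold congruence ($0$, $1$, $\geq 2$) in place of the paper's two-way one ($\leq 1$ versus $>1$), and your eventually-periodic phrasing of the final contradiction in place of the paper's $\phi(L)=L_{nprime}$---and the step you single out as the main obstacle (verifying that the surviving branches meet the hypothesis of Lemma~\ref{lem:conditions-linearity}) is exactly the step the paper also asserts without further justification.
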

\begin{proof}
  Let us consider the language made of a non-prime numbers of $a$
  $$L_{nprime} = \{a^{nm} \mid n,m>1\}$$

  This language is not semilinear since its Parikh image is equal to $\mathrm{Im(F)}$ where $F(x_1, x_2)=4\langle 1 \rangle + x_1\langle 2\rangle + x_2\langle 2\rangle + x_1x_2\langle 1 \rangle$.
  Therefore $L_{nprime}$ does not belong to $\mathbf{L}$, but belongs to $\mathbf{IO(RL)}$: indeed $a^2a^\ast[a:=a^2a^\ast]_{IO} \rightarrow L_{nprime}$ and $a^2a^\ast$ is a regular language. Therefore, if $\mathbf{RL} \subseteq \mathbf{L}$, then $L_{nprime}$ is in $\mathbf{IO(L)}$

  Now, consider the homomorphism $\phi:\{a, b\} \to a^\ast$ such that $\phi(a)=a$ and $\phi(b)=\epsilon$. Then we obtain:
  $$\phi^{-1}(L_{nprime}) = L_{anp,b} = \{w \in \{a, b\}^\ast \mid |w|_a=nm, \text{ where }n,m > 1\}$$
  
  Let us assume $L_{anp, b}$ belongs to $\mathbf{IO(L)}$.
  Then, according to Lemma~\ref{lem:fully-effective-std}, there exists a fully-effective standard derivation $d_{anp,b}=\bigcup_{i \in I}d_i$ for this language,  where for every $i \in I$, $d_i=L_{i0}[x_{i_1}:=L_{i1}]_{IO} \dots [x_{i_{n_i}}:=L_{in_i}]_{IO}$; and for every $0 \leq j \leq n_i$  $L_{ij}$, belongs to $\mathbf{L}$.

  Let us consider the language $L_{\textit{diff}} \subsetneq L_{anp, b}$ defined as:
  $$L_{\textit{diff}} = \{b^{p_0}ab^{p_1}a...ab^{p_{nm}} \mid n,m > 1 \text{ and for every }  0 \leq i,j \leq nm, i \neq j \imp p_i \neq p_j\}$$

We aim at building a language $L$ such that $L_{\textit{diff}} \subseteq L \subseteq L_{anp,b}$. In order to do so, for every $i \in I$, let us consider the congruence $\cong_i$ defined as:
$$w_1 \cong_i w_2 \text{ iff for every }y \in {In_{d_i}^a}, |w_1|>1 \iff |w_2|>1$$

Such a congruence is of finite index. According to the separation lemma and lemma \ref{lem:unions-iol}, we can consider the derivation
$$d'_i = \bigcup_{C_0, \dots, C_{n_i} \in \Sigma^\ast/{\cong_i}} (L_{i0} \cap C_0)[x_1 := (L_{i1} \cap C_1)]_{IO} \dots [x_{n_i} := (L_{in_i} \cap C_{n_i})]_{IO}$$
such that $d_i$ and $d'_i$ derive the same language. Moreover, for every $C_0, \dots, C_{n_i} \in \Sigma^\ast/{\cong_i}$, the derivation $(L_{i0} \cap C_0)[x_1 := (L_{i1} \cap C_1)]_{IO} \dots [x_{n_i} := (L_{in_i} \cap C_{n_i})]_{IO}$ is in standard form, and is fully effective (because $L_{ij} \cap C_j$ is a sublanguage of $L_{ij}$, for every $1 \leq j \leq n_i$).

Now let us consider a derivation $d''_i=(L_{i0} \cap C_0)[x_1 := (L_{i1} \cap C_1)]_{IO} \dots [x_{n_i} := (L_{in_i} \cap C_{n_i})]_{IO}$ for some $C_0, \dots, C_{n_i} \in \Sigma^\ast/{\cong_i}$, such that there exist a chain $ch \in {Ch}_{d''_i}^a$, symbols $y_1, y_2 \in {Ch}_{d''_i}^a$ ($y_1 \neq y_2$), and integers $0 \leq i_1, i_2 \leq n_i$, for which:
\begin{itemize}
  \item for every word $w \in C_{i_1}$, $|w|_{y_1} > 1$;
  \item for every word $w \in C_{i_2}$, $|w|_{y_2} > 1$;
\end{itemize}

Then, according to lemma~\ref{lem:copy-separation}, any word in $L''_i$ where $d''_i \in \mathcal{D}_{L''_i}$ does not belong to $L_{\textit{diff}}$. We can therefore build the language $L$ such that:
$$d = \bigcup_{i \in I'}\bigcup_{C_0 \in \mathcal{C}_{i0}} \dots \bigcup_{C_{n_i} \in \mathcal{C}_{in_i}} (L_{i0} \cap C_0)[x_1 := (L_{i1} \cap C_1)]_{IO} \dots [x_{n_i} := (L_{in_i} \cap C_{n_i})]_{IO}$$
is in $\mathcal{D}_L$, where $d$ results from removing the derivations of languages which intersection with $L_{\textit{diff}}$ is empty.

But, for every $i \in I'$ and every $C_0 \in \mathcal{C}_{i0}, \dots C_{n_i} \in \mathcal{C}_{in_i}$, the derivation $(L_{i0} \cap C_0)[x_1 := (L_{i1} \cap C_1)]_{IO} \dots [x_{n_i} := (L_{in_i} \cap C_{n_i})]_{IO}$ must verify the assumptions of Lemma~\ref{lem:conditions-linearity}; therefore, such a derivation derives a language which is $a$-linear, and $L$ is a finite union of $a$-linear languages, hence an $a$-linear language itself.

But, $\phi(L_{\textit{diff}})=L_{nprime} \subseteq \phi(L) \subseteq \phi(L_{anp, b}) = L_{nprime}$. Therefore, $L_{nprime}$ should be $a$-linear, which is false, and we obtain a contradiction.

\end{proof}

We already commented the analogy between our demonstration and the one in \cite{fischerphd}. One major difference is that Fischer's proof is strongly related to the formalism generating IO-macro languages. In the present case, we intend to work only on the notions of semilinearity and of the copying power which enrich the original family of semilinear languages \textbf{L}.


\section{Conclusion}

In the present paper, we propose a study on the effect of the IO-substitution on the Parikh image of languages in \textbf{L}, an abstract family of semilinear languages.
We first gave a full and complete characterisation of these images in terms of factored Parikh image, and based on this result, we gave a new proof that languages in \textbf{IO(L)} verify the constant-growth property.
This first step was also the opportunity to define universally- and existentially-linear Parikh images, and to prove that languages which Parikh images belong to these classes also verify the constant-growth property.
We gave some brief arguments in favour of the interest of the newly introduced classes of universally- and existentially-linear Parikh images in capturing natural language syntax, which would require further investigations.
In the second part of the paper, we proved that \textbf{IO(L)} is not closed under inverse homomorphism, when $\mathbf{RL} \subseteq \textbf{L}$.
The proof relies on the results obtained in the first section, and in particular in showing that the copying power brought by the IO-substitution operation forces the words to verify a certain pattern.
As a consequence, we can conclude that $\mathbf{IO(MCFL)}$ is not an abstract family of languages, which was an open question in \cite{Bourreau-IOsubstitution}. 

This work gives space for further problems. First, the sketch of the proof of the non-closure property under inverse homomorphism can probably be reused to prove the same result on other formalisms in which copying material is allowed.
In particular, we can conjecture that parallel multiple context-free languages are not closed by such an operation, which contradicts the conjecture in the seminal paper \cite{seki91}. The same question can be addressed on the language in the IO hierarchy \cite{Damm82,salvati-kobele-io}.
Some questions can also be addressed related to the first part of the present article. For instance, how can we generate languages which are full and complete for universally-linear sets?  Addressing the same question on the existentially-linear sets seems less trivial as the functions used to build such sets are free but on one of their arguments.

Finally, some formal questions on the IO-substitution operation can be addressed. One of them is to characterise the languages obtained with infinite application of such an operation; in particular, IO-macro languages might be generated by recursive application of some IO-substitutions. For example, the language $\{a^{n^2} \mid n \in \mathbb{N}\}$ can be expressed as: $\epsilon+a([a:=aa]_{IO})^\ast = \{\epsilon\}\cup \bigcup_{n \in \mathbb{N}}a\underbrace{[a:=aa]_{IO}[a:=aa]_{IO}\dots [a:=aa]_{IO}}_{n}$. With such patterns, one might be able to express languages such as macro-languages, index languages or parallel multiple context-free languages. We will therefore investigate whether the IO-substitution can be used to revisit and classify classes of languages in which some copying mechanism is used.


\begin{acknowledgements}
This work was funded by the Deutsche Forchungsgemeinschaft, under the project SFB 991 ``Die Struktur von Repr\"asentationen in Sprache, Kognition und Wissenschaft''.

I am thankful to Sylvain Salvati for the motivating discussions on this topic; to Laura Kallmeyer for her insights on the notions of universally-linear sets; and to Christian Wurm who helped me to improve the formal definitions with his feedbacks. The responsibility for any mistakes contained herein rests solely on me.
\end{acknowledgements}

\bibliographystyle{apalike}
\bibliography{Biblio}

\end{document}